\newif\iffullversion
\newcolumntype{R}[2]{%
    >{\adjustbox{angle=#1,lap=\width-(#2)}\bgroup}%
    l%
    <{\egroup}%
}
\newcommand*\rot{\multicolumn{1}{R{70}{1em}}}% no optional argument here, please!
\newcommand{\mc}[2]{\multicolumn{#1}{c}{#2}}
\newcommand{\mr}[2]{\multirow{#1}{*}{#2}}
\def\plist@algorithm{Algorithm\space}
\newcommand{\cmark}{\CIRCLE}
\newcommand{\xmark}{\Circle}
\newcommand{\smark}{\LEFTcircle}
\def\BibTeX{{\rm B\kern-.05em{\sc i\kern-.025em b}\kern-.08em
    T\kern-.1667em\lower.7ex\hbox{E}\kern-.125emX}}
\newtheorem{theorem}{Theorem}
\newcommand{\ThisWork}{{\sc Falcon}}
\newcommand{\falcon}{{\sc Falcon~}}
\newcommand{\comment}[1]{}
\newcommand{\ch}[1]{{\color{black} #1}}
\newcommand{\chh}[1]{{\color{black} #1}}
\algnewcommand\algorithmicinput{\textbf{Common Randomness:}}
\algnewcommand\CommonR{\item[\algorithmicinput]}
\newcommand{\tinyth}{$^{\mathsf{{\tiny th}}}$}
\newcommand{\party}[1]{P_{#1}}
\newcommand{\prm}{p}
\newcommand{\ring}{L}
\newcommand{\share}[1]{\llbracket #1\rrbracket}
\newcommand{\Share}[2]{\llbracket #1\rrbracket^{#2}}
\newcommand{\protx}[1]{\ensuremath{\Pi_{\mathsf{#1}}}\xspace}
\newcommand{\wa}[1]{\mathsf{wrap}_{#1}}
\newcommand{\bbZ}{\mathbb{Z}}
\newcommand{\zo}{\{0,1\}}
\definecolor{dblue}{rgb}{0.00, 0.50, 0.90}
\definecolor{lblue}{rgb}{0.70, 0.80, 1.00}
\definecolor{lpink}{rgb}{0.90, 0.70, 1.00}
\definecolor{lgreen}{rgb}{0.80, 0.95, 0.75}
\definecolor{lred}{rgb}{0.99, 0.50, 0.55}
\definecolor{lyellow}{rgb}{1.00, 0.95, 0.75}
\definecolor{llgrey}{rgb}{0.95, 0.95, 0.95}
\definecolor{salmon}{rgb}{0.99, 0.90, 0.90}
\newcommand{\Abort}{\mathsf{Abort}}
\newcommand{\Z}{\mathbb{Z}} % the integers
\newcommand{\Sim}{\mathcal{S}}
\newcommand{\etal}{\textit{et al}.}
\newcommand{\Adv}{\mathcal{A}}
\newcommand{\Func}{\mathcal{F}}
\newcommand{\Funct}[1]{\Func_\mathsf{#1}}
\newcommand{\round}[1]{\lfloor #1 \rceil}
\newenvironment{boxfig}[2]{% {#1}{#2} = {Caption}{label}
\begin{figure}[h]
  \newcommand{\FigCaption}{#1}
  \newcommand{\FigLabel}{#2}
  \vspace{-\medskipamount}
  \begin{center}
    \begin{small}
      \begin{tabular}{@{}@{~~}l@{~~}@{}}
%        \hline
%        \rule[-1.5ex]{0pt}{1ex}
        \begin{minipage}[b]{.94\columnwidth}
          \vspace{1ex}
          \smallskip
          }{%
        \end{minipage}\\
%        \hline
      \end{tabular}
    \end{small}
    %\vspace{-0.2\bigskipamount}
    \caption{\small \FigCaption}
    \label{\FigLabel}
  %\vspace{-0.3cm}
  \end{center}
    %\vspace{-0.05cm}
\end{figure}
}
\newenvironment{Boxfig}[3]{% {Caption}{Label}{Title}
  \begin{boxfig}{#1}{#2}
    \begin{center}
      \textbf{#3}
    \end{center}
  }{%
    \end{boxfig}
  }
\begin{document}

  \author*[1]{Sameer Wagh}
  \author[2]{Shruti Tople}
  \author[3]{Fabrice Benhamouda}
  \author[4]{Eyal Kushilevitz}
  \author[5]{Prateek Mittal}
  \author[6]{Tal Rabin}
  \affil[1]{Princeton University~\& UC Berkeley, E-mail: swagh@princeton.edu}
  \affil[2]{Microsoft Research, E-mail: shruti.tople@microsoft.com}
  \affil[3]{Algorand Foundation, E-mail: fabrice.benhamouda@normalesup.org}
  \affil[4]{Technion, E-mail: eyalk@cs.technion.ac.il}	 
  \affil[5]{Princeton University, E-mail: pmittal@princeton.edu}
  \affil[6]{Algorand Foundation, E-mail: talrny@yahoo.com}	 

  \title{\huge \ThisWork{}: Honest-Majority Maliciously Secure Framework for Private Deep Learning}
  \runningtitle{\ThisWork{}: Honest-Majority Maliciously Secure Framework for Private Deep Learning}
  %\subtitle{...}

\begin{abstract}
{
\ch{We propose \ThisWork{}, an end-to-end 3-party protocol for efficient private training and inference of large machine learning models. \ThisWork{} presents four main advantages -- (i) It is highly {\em expressive} with support for high capacity networks  such as VGG16 (ii) it supports batch normalization which is important for training complex networks such as AlexNet (iii) \ThisWork{} guarantees {\em security with abort} against malicious adversaries, assuming an honest majority (iv) Lastly, \ThisWork{} presents new theoretical insights for protocol design that make it {\em highly efficient} and allow it to outperform existing secure deep learning solutions.} Compared to prior art for private inference, we are about $8\times$ faster than SecureNN (PETS'19) on average and comparable to ABY$^3$ (CCS'18). We are about $16-200\times$ more communication efficient than either of these. For private training, we are about $6\times$ faster than SecureNN, $4.4\times$ faster than ABY$^3$ and about $2-60\times$ more communication efficient. Our experiments in the WAN setting show that over large networks and datasets, \emph{compute operations} dominate the overall latency of MPC, as opposed to the communication.}
\end{abstract}
 \keywords{Multi Party Computation, Secure Comparison, Deep Learning, Neural Networks}
%  \classification[PACS]{}
 % \communicated{...}
 % \dedication{...}

  \journalname{Proceedings on Privacy Enhancing Technologies}
\DOI{Editor to enter DOI}
  \startpage{1}
  \received{..}
  \revised{..}
  \accepted{..}

  \journalyear{..}
  \journalvolume{..}
  \journalissue{..}

\maketitle

\vspace{-1.6cm}

\section{Introduction}\label{sec:intro}
\vspace{-0.2cm}
With today's digital infrastructure, tremendous amounts of private data is continuously being generated -- data which combined with deep learning algorithms can transform the current social and technological landscape. For example, distribution of child exploitative imagery  has plagued social media platforms~\cite{cei, nyt2019}. However, stringent government regulations hamper automated detection of such harmful content. Support for secure computation of state-of-the-art image classification networks would aid in detecting child exploitative imagery  on social media. Similarly, there is promise in analyzing medical data across different hospitals especially for the treatment of rare diseases~\cite{cho2018secure}. 
In both these scenarios, multiple parties (i.e., social media platforms or hospitals) could co-operate to train efficient models that have high prediction accuracy. However, the sensitive nature of such data demands deep learning frameworks that allow training on data aggregated from multiple entities while ensuring strong privacy and confidentiality guarantees.  A synergistic combination of secure computing primitives with deep learning algorithms would enable sensitive applications to benefit from the high prediction accuracies of neural networks. 

\begin{table*}[h]
\centering
\resizebox{\textwidth}{!}{
\begin{tabular}{c c c c c c c c c c c c c c  @{\hskip 0.2in}       c c c c c c c c c c c c c}
&	& \rot{Inference} &\rot{Training} & \rot{Semi-honest} & \rot{Malicious} & \rot{Linear} & \rot{Convolution} & \rot{Exact ReLU} & \rot{Maxpool} & \rot{Batch-Norm} & \rot{HE} & \rot{GC} & \rot{SS} & \rot{LAN} & \rot{WAN} & \rot{MNIST} & \rot{CIFAR-10} & \rot{Tiny ImageNet} & \rot{From~\cite{secureml}} & \rot{From~\cite{chameleon}} & \rot{From~\cite{minionn}} & \rot{LeNet} & \rot{AlexNet} & \rot{VGG-16} \\ \cmidrule(lr){3-4} \cmidrule(lr){5-6} \cmidrule(lr){7-11} \cmidrule(lr){12-14} \cmidrule(lr){15-16} \cmidrule(lr){17-19} \cmidrule(lr){20-25}
& \mr{2}{Framework} 	& \mc{2}{Private} & \mc{2}{Threat} 	& \mc{5}{Supported} 		& \mc{3}{Techniques} & \mc{2}{LAN/}		& \mc{3}{Evaluation} 	& \mc{6}{Network} \\ 
&  								& \mc{2}{Capability} & \mc{2}{Model} 		& \mc{5}{Layers} 			& \mc{3}{Used} 			&  	\mc{2}{WAN}	& \mc{3}{Dataset} 		& \mc{6}{Architectures} \\ \cmidrule(lr){3-14} \cmidrule(lr){15-25}
& & \mc{12}{Theoretical Metrics} & \mc{11}{Evaluation Metrics} \\ \midrule

\mr{7}{2PC} 
&MiniONN~\cite{minionn} &\cmark &\xmark &\cmark &\xmark &\cmark &\cmark &\cmark &\cmark &\xmark &\cmark &\cmark &\cmark &\cmark &\xmark &\cmark &\cmark &\xmark &\cmark &\xmark &\cmark &\xmark &\xmark &\xmark \\ 
&Chameleon~\cite{chameleon} &\cmark &\xmark &\cmark &\xmark &\cmark &\cmark &\cmark &\cmark &\xmark &\xmark &\cmark &\cmark          &\cmark &\cmark &\cmark &\cmark &\xmark &\cmark &\cmark &\cmark &\xmark &\smark &\xmark \\ 
&EzPC~\cite{ezpc} &\cmark &\xmark &\cmark &\xmark &\cmark &\cmark &\cmark &\cmark &\xmark &\xmark &\cmark &\cmark           &\cmark &\cmark &\cmark &\cmark &\xmark &\cmark &\xmark &\cmark &\xmark &\xmark &\xmark\\ 
&Gazelle~\cite{gazelle} &\cmark &\xmark &\cmark &\xmark &\cmark &\cmark &\cmark &\cmark &\xmark &\cmark &\cmark &\cmark       &\cmark &\xmark &\cmark &\cmark &\xmark &\cmark &\cmark &\cmark &\xmark &\xmark &\xmark  \\ 
&SecureML~\cite{secureml} &\cmark &\cmark &\cmark &\xmark &\cmark &\cmark &\cmark &\cmark &\xmark &\cmark &\cmark &\cmark       &\cmark &\smark &\cmark &\xmark &\xmark &\cmark &\xmark &\xmark &\xmark &\xmark &\xmark\\ 
&XONN~\cite{xonn} &\cmark &\xmark &\cmark &\cmark &\cmark &\cmark &\cmark &\cmark &\smark &\xmark &\cmark &\cmark        &\cmark &\xmark &\cmark &\cmark &\xmark &\cmark &\cmark &\cmark &\xmark &\xmark &\smark \\ 
&Delphi~\cite{delphi} &\cmark &\xmark &\cmark &\xmark &\cmark &\cmark &\cmark &\cmark &\xmark &\cmark &\cmark &\cmark        &\cmark &\xmark &\xmark &\cmark &\xmark &\xmark &\xmark &\cmark &\xmark &\xmark &\smark  \\ \cmidrule(lr){1-25}

\mr{6}{3PC} 
&ABY$^3$~\cite{aby3} &\cmark &\cmark &\cmark &\cmark &\cmark &\cmark &\cmark &\cmark &\xmark &\xmark &\cmark &\cmark         &\cmark &\smark &\cmark &\xmark &\xmark &\cmark &\cmark &\cmark &\xmark &\xmark &\xmark\\  
&SecureNN~\cite{securenn} &\cmark &\cmark &\cmark &\smark &\cmark &\cmark &\cmark &\cmark &\smark &\xmark &\xmark &\cmark        &\cmark &\cmark &\cmark &\xmark &\xmark &\cmark &\cmark &\cmark &\xmark &\xmark &\xmark\\ 
&CryptFlow~\cite{cryptflow} &\cmark &\xmark &\cmark &\cmark &\cmark &\cmark &\cmark &\cmark &\xmark &\xmark & \xmark  &\cmark          &\smark &\xmark &\cmark &\cmark &\cmark &\cmark &\cmark &\cmark &\cmark &\xmark &\smark\\ 
&QuantizedNN~\cite{quantizednn}* &\cmark &\xmark &\cmark &\cmark &\cmark &\cmark &\cmark &\cmark &\smark &\smark & \xmark  &\cmark          &\cmark &\cmark &\xmark &\xmark &\cmark &\xmark &\xmark &\xmark &\xmark &\xmark &\smark\\ 
&ASTRA~\cite{astra}  &\cmark &\cmark &\cmark &\smark &\cmark &\cmark &\cmark &\cmark &\xmark &\xmark &\cmark &\cmark        &\cmark &\cmark &\cmark &\xmark &\xmark &\cmark &\xmark &\xmark &\xmark &\xmark &\xmark\\ 
&BLAZE~\cite{blaze}  &\cmark &\cmark &\cmark &\cmark &\cmark &\cmark &\cmark &\cmark &\xmark &\xmark &\cmark &\cmark        &\cmark &\cmark &\smark &\xmark &\xmark &\cmark &\xmark &\xmark &\xmark &\xmark &\xmark\\ 
&Falcon (This Work) &\cmark &\cmark &\cmark &\cmark &\cmark &\cmark &\cmark &\cmark &\cmark &\xmark &\xmark &\cmark        &\cmark &\cmark &\cmark &\cmark &\smark &\cmark &\cmark &\cmark &\cmark &\cmark &\cmark \\ \cmidrule(lr){1-25}

\mr{2}{4PC} 
&FLASH~\cite{flash}  &\cmark &\cmark &\cmark &\cmark &\cmark &\cmark &\cmark &\cmark &\xmark &\xmark &\xmark &\cmark        &\cmark &\cmark &\cmark &\smark &\xmark &\cmark &\smark &\xmark &\xmark &\xmark &\xmark\\ 
&Trident~\cite{trident}  &\cmark &\cmark &\cmark &\cmark &\cmark &\cmark &\cmark &\cmark &\xmark &\xmark &\cmark &\cmark        &\cmark &\cmark &\cmark &\xmark &\xmark &\cmark &\xmark &\xmark &\xmark &\xmark &\xmark\\ 
\end{tabular}}
\caption{Comparison of various private deep learning frameworks. \falcon{} proposes efficient protocols for non-linear functionalities such as ReLU and batch normalization (1) purely using modular arithmetic (2) under malicious corruptions (3) supporting both private training and inference. \falcon{} also provides a comprehensive evaluation (1) over larger networks and datasets (2) extensively compares with related work (3) provides newer insights for future directions of PPML. \cmark{} indicates the framework supports a feature, \xmark{} indicates not supported feature, and \smark{} refers to fair comparison difficult due to the following reasons: SecureNN provides malicious privacy but not correctness and supports division but not batch norm, XONN supports a simplified batch norm specific to a binary activation layer, ABY$^3$ does not present WAN results for neural networks, Chameleon evaluates over a network similar to AlexNet but using the simpler mean-pooling operations, and due to the high round complexity and communication, SecureML provides an estimate of their WAN evaluation, Delphi evaluates over network such as ResNet-32, CryptFlow evaluates networks such as DenseNet-121, ResNet-50, uses weaker network parameters in LAN and uses ImageNet dataset. QuantizedNN uses inherent quantization of the underlying NN and performs extensive evaluation over MobileNet architectures and * refers to 3PC version among the 8 protocols. BLAZE uses a Parkinson disease dataset, similar in dimension to MNIST. FLASH and Trident use few other smaller datasets in their evaluation as well as evaluate over increasing number of layers \chh{over the network architecture from~\cite{secureml}.}}
\vspace{-0.8cm}
\label{tab:comparison}
\end{table*}

Secure multi-party computation (MPC) techniques provide a transformative capability to perform secure analytics over such data~\cite{GC,SS,secureml}. MPC provides a cryptographically secure framework for computations where the collaborating parties do not reveal their secret data to each other. The parties only learn the output of the computation on the combined data while revealing nothing about the individual secret inputs. Recently, there has been research in reducing the performance overhead of MPC protocols, specifically tailored for machine learning~\cite{securenn, aby3, xonn, gazelle, chameleon}. \ch{In this work, we focus on advancing the research in other dimensions such as expressiveness, scalability to millions of parameters, and stronger security guarantees (where parties can arbitrarily deviate from the protocol) that are necessary for practical deployment of secure deep learning frameworks.} We present \falcon --- an efficient and expressive 3-party deep learning framework that provides support for both training and inference with malicious security guarantees. Table~\ref{tab:comparison} provides a detailed comparison of \falcon with prior work.

\textbf{Contributions.}
Our contributions are as follows:

{\em (1) Malicious Security:}
\falcon provides strong security guarantees in an honest-majority adversarial setting. This assumption is similar to prior work where majority of the parties (e.g., 2 out of three) behave honestly~\cite{aby3,FLNW17}. \ThisWork{} proposes new protocols that are secure against such corruptions and ensure that either the computation always correctly completes or aborts detecting malicious activity. We achieve this by designing new protocols for the computation of non-linear functions (like ReLU).  While MPC protocols are very efficient at computing linear functions, computing non-linear functions like ReLU is much more challenging. We propose solutions both for the malicious security model and provide even more efficient protocols where semi-honest security is sufficient. We formally prove the security of \falcon using the standard simulation paradigm (see Section~\ref{sec:proofs}). We implement both the semi-honest and malicious protocols in our end-to-end framework. In this manner, \ThisWork{} provides a choice to the developers to select between either of the security guarantees depending on the trust assumption among the parties and performance requirements (improved performance for semi-honest protocols).

{\em (2) Improved Protocols:}
\ThisWork{} combines techniques from SecureNN~\cite{securenn} and ABY$^3$~\cite{aby3} that result in improved protocol efficiency. We improve the theoretical complexity of the central building block -- derivative of ReLU -- by a factor of $2\times$ through simplified algebra for fixed point arithmetic. We demonstrate our protocols in a smaller ring size, which is possible using an exact yet expensive truncation algorithm. However, this enables the entire framework to use a smaller datatype, thus reducing their communication complexity at least $2\times$. This reduced communication is critical to the communication improvements of \falcon over prior work. Furthermore, as can be seen in Section~\ref{sec:eval}, these theoretical improvements lead to even larger practical improvements due to the recursive dependence of the complex functionalities on the improved building blocks. Overall, we demonstrate how to achieve maliciously secure protocols for non-linear operations  using arithmetic secret sharing and avoiding the use of interconversion protocols (between arithmetic, Boolean and garbled circuits). 

{\em (3) Expressiveness:}
Our focus is to provide simple yet efficient protocols for the fundamental functionalities commonly used in state-of-the-art neural networks. \ch{Batch normalization, has been previously considered in privacy-preserving inference as linear transformation using Homomorphic Encryption~\cite{chabanne2017privacy, ibarrondo2018fhe, chou2018faster}. }
However, batch normalization is critical for stable convergence of networks as well as to reduce the parameter tuning required during training of neural networks. \chh{\falcon demonstrates full support for \texttt{Batch-Normalization} layers, both forward and backward pass, in private machine learning.} In other words, \falcon supports both private training and private inference. This extensive support makes \falcon expressive, thereby supporting evaluation of large networks with hundreds of millions parameters such as VGG16~\cite{vgg16} and AlexNet~\cite{alexnet} over datasets such as MNIST~\cite{mnist}, CIFAR-10~\cite{cifar} as well as Tiny ImageNet~\cite{tinyimagenet} including in both the LAN and WAN network settings. Designing secure protocols for training is more difficult due to the operations involved in back-propagation which are not required for inference. A number of prior works assume training in a trusted environment and hence provide support for only inference service~\cite{xonn,minionn,chameleon, ezpc, gazelle}. However, sensitive data is often inaccessible even during training as described in our motivating application in Section~\ref{sec:overview}.

\textbf{End-to-end Implementation and Results.}
We implement both the semi-honest and malicious variants of \falcon{} in our end-to-end framework. The codebase is written in {\tt C++} in about 14.6k LOC and will be open sourced. We experimentally evaluate the performance overhead of \falcon for both private training and inference on multiple networks and datasets. We use  $6$ diverse networks ranging from simple 3-layer multi-layer perceptrons (MLP) with about $118,000$ parameters to large networks with about 16-layers having $138$ million parameters. 
%We are also the first one to demonstrate training over networks architectures such as AlexNet due to the implementation of batch-normalization protocols.
We trained these networks on MNIST~\cite{mnist}, CIFAR-10~\cite{cifar} and Tiny ImageNet~\cite{tinyimagenet} datasets as appropriate based on the network size. We note that \falcon is one of the few private ML frameworks to support training of high capacity networks such as AlexNet and VGG16 on the Tiny ImageNet dataset. We perform extensive evaluation of our framework in both the LAN and WAN setting as well as semi-honest and malicious adversarial setting. For private inference, we are $16\times$ faster than XONN~\cite{xonn}, $32 \times$ faster than Gazelle~\cite{gazelle}, $8\times$ faster than SecureNN, and comparable to ABY$^3$ on average. For private training, we are $4.4\times$ faster than ABY$^3$ and $6\times$ faster than SecureNN~\cite{securenn}. Depending on the network, our protocols can provide \textit{up to an order of magnitude} performance improvement. \falcon is up to two orders of magnitude more communication efficient than prior work for both private training and inference. Our results in the WAN setting show that compute operations dominate the overall latency for large networks in \falcon and not the communication rounds. Hence, \falcon is an optimized 3-PC framework w.r.t. the communication which is often the bottleneck in MPC.

\vspace{-0.8cm}

\section{\ThisWork{} Overview}\label{sec:overview}
\vspace{-0.2cm}
Next, we describe the application setting for \ThisWork{}, provide a motivating application, state the threat model, and an overview of our technical contributions.

\vspace{-0.6cm}
\subsection{ A 3-Party Machine Learning Service}
\vspace{-0.3cm}
We consider the following scenario. There are two types of users, the first own data on which the learning algorithm will be applied, we call them data holders. The second are users who query the system after the learning period, we call these query users. These two sets of users need not be disjoint. We design a machine learning service. This service is provided by 3 parties which we call computing servers. We assume that government regulations or other social deterrents are sufficient enforcers for non-collusion between these computing servers. The service works in two phases: the training phase where the machine learning model of interest is trained on the data of the data holders and the inference phase where the trained model can be queried by the query users. \ch{The data holders share their data in a replicated secret sharing form~\cite{AFLNO16} between the 3 computing servers.} These 3 servers utilize the shared data and privately train the network. After this stage, query users can submit queries to the system and receive answers based on the newly constructed model held in shared form by the three servers. This way, the data holders' input has complete privacy from each of the 3 servers. Moreover, the query is also submitted in shared form and thus is kept secret from the 3 servers. 

Recent advances in MPC have rendered 3PC protocols some of the most efficient protocols in the space of privacy-preserving machine learning. Though MPC is not a broadly deployed technology yet, the 3PC adversarial model has enjoyed adoption~\cite{sharemind, securenntfe, securennpysft} due to their efficiency and simplicity of protocols. Below, we describe a concrete motivating application that would benefit from such a 3-party secure machine learning service.

\vspace{-0.5cm}
\subsubsection{Motivating Application: Detection of Child Exploitative Images Online}
\vspace{-0.2cm}
In recent years, the distribution of child exploitative imagery (CEI) has proliferated with the rise of social media platforms -- from half a million reported in between 1998-2008 to around 12 million reports in 2017 and 45 million in 2018~\cite{cei, nyt2019}.
Given the severity of the problem and stringent laws around the handling of such incidents (18 U.S. Code \S2251, 2252), it is important to develop solutions that enable efficient detection and handling of such data while complying with stringent privacy regulations. Given the success of ML, especially for image classification, it is important to leverage ML in detecting CEIs (a computer vision application). {\sc Falcon's}  approach, in contrast to deployed systems such as PhotoDNA~\cite{photodna}, enables the use of ML for this use-case. However, the inability to generate a database of the original images (due to legal regulations) leads to a problem of lack of training data. \falcon provides a cryptographically secure framework for this conundrum, where the client data is split into unrecognizable parts among a number of non-colluding entities. In this way, the solution is two-fold, MPC enables the ability to accumulate good quality training data and at the same time can enable machine-learning-as-a-service (MLaaS) for the problem of CEIs. The 3 computing parties can be Facebook, Google, and Microsoft, and will in turn be the providers of such a service. A public API can be exposed to entities willing to make use of this service, very similar to the PhotoDNA portal~\cite{photodna}. Organizations (clients of this service) that deal with significant number of CEI's can send automated requests to these 3 servers and locally reconstruct the classification result using the received responses. In terms of the adversarial model, we believe that the stringent legal framework around this application is a sufficient deterrent for these large organizations to prevent collusion among the parties. Similarly, a maliciously secure adversarial model further safeguards against individual servers being compromised. In this manner, MPC can enable an end-to-end solution to automated detection of CEIs in social media with strong privacy to the underlying data.

\vspace{-0.5cm}
\subsection{Threat Model}
\vspace{-0.2cm}
%<<<<<<< HEAD
Our threat model assumes an honest majority among the three parties in the setting described above. This is a common adversarial setting considered in previous secure multi-party computation approaches~\cite{aby3, secureml, AFLNO16, FLNW17}. We consider that one of the three parties can be either semi-honest or malicious. A semi-honest adversary passively tries to learn the secret data of the other parties while a malicious adversary can arbitrarily deviate from the protocol. We assume the private keys of each of the parties are stored securely and not susceptible to leakage. We do not protect against denial of service attacks where parties refuse to cooperate. Here, \falcon simply resorts to aborting the computation. %\sameer{Is this true?}.

\textbf{Assumptions \& Scope.}
The 3 parties each have shared point-to-point communication channels and pairwise shared seeds to use AES as a PRNG to generate cryptographically secure common randomness.
We note that as the query users receive the answers to the queries in the clear \falcon does not guarantee protecting the privacy of the training data from attacks such as model inversion, membership inference, and attribute inference~\cite{shokri2017membership,fredrikson2015model,tramer2016stealing}. Defending against these attacks is an orthogonal problem and out of scope for this work. We assume that users provide consistent shares and that model poisoning attacks are out of scope.

\vspace{-0.5cm}
\subsection{Technical Contributions}\label{subsec:techcontrib}
\vspace{-0.2cm}
In this section, we summarize some of the main contributions of this work with a focus on techniques used to achieve our results and improvements.

\textbf{Hybrid Integration for Malicious Security.}
\ThisWork{} consists of a hybrid integration of ideas from SecureNN and ABY$^3$ along with newer protocol constructions for privacy-preserving deep learning. SecureNN does not provide correctness in the presence of malicious adversaries. Furthermore, the use of semi-honest parties in SecureNN makes it a significant challenge to convert those protocols to provide security against malicious corruptions. We use replicated secret sharing (such as in~\cite{AFLNO16, FLNW17, aby3}) as our building block and use the redundancy to enforce correct behaviour in our protocols. Note that changing from the 2-out-of-2 secret sharing scheme in SecureNN to a 2-out-of-3 replicated secret sharing \ch{crucially alters some of the building blocks} -- these protocols are a new contribution of this work. We work in the 3 party setting where at most one party can be corrupt.  We prove each building block secure in the Universal Composability (UC) framework. We show that our protocols are (1) perfectly secure in the stand-alone model, i.e., the distributions are identical and not just statistically close in a model where the protocol is executed only once; and (2) have straight-line black-box simulators, i.e., only assume oracle access  and do no rewind. Theorem 1.2 from Kushilevitz~\etal~\cite{kushilevitz2010information} then implies security under general concurrent composition.

\textbf{Theoretical Improvements to Protocols.}
\ThisWork{} proposes more efficient protocols for common machine learning functionalities while providing stronger security guarantees. We achieve this through a number of theoretical improvements for reducing both the computation as well as the communication. First, in \ThisWork{} 
all parties execute the same protocol in contrast to SecureNN where the protocol is asymmetric. The uniformity of the parties leads to more optimal resource utilization. Second, the protocol for derivative of ReLU, in SecureNN~\cite{securenn} first transforms the inputs using a \texttt{Share Convert} subroutine (into secret shares modulo an odd ring) and then invokes a \texttt{Compute MSB} subroutine to compute the most significant bit (MSB) which is closely related to the DReLU function.  Note that DReLU, when using fixed point encoding over a ring $\mathbb{Z}_L$ is defined as follows:
\begin{equation}\label{eq:dreluformula}
\mathrm{DReLU}(x) =
\begin{cases}
0 	& \text{if $x>L/2$}\\
1	& \text{Otherwise}
\end{cases}
\end{equation} Each of these subroutines have roughly the same overhead. In \ThisWork{}, we show an easier technique using new mathematical insights to compute DReLU which reduces the overhead by over $2\times$. Note that ReLU and DReLU, non-linear activation functions central to deep learning, are typically the expensive operations in MPC. The first two points above lead to strictly improved protocol for these. Third, \falcon uses a smaller ring size while using an exact yet expensive truncation protocol. This trade-off however allows the entire framework to operate on smaller data-types, thus reducing the communication complexity at least $2\times$. Furthermore, this communication improvement is amplified with the superlinear dependence of the overall communication on the ring size (cf Table~\ref{tab:overheads}). This reduced communication is critical to the communication improvements of \falcon over prior work. In other words, we notice strictly larger performance improvements (than the theoretical improvements) in our end-to-end deployments of benchmarked networks presented in Section~\ref{sec:eval}.

\textbf{Improved Scope of ML Algorithms.}
\chh{Prior works focus on implementing protocols for linear layers and important non-linear operations. We propose and implement an end-to-end protocol for batch normalization (both forward and backward pass).} Batch-normalization is widely used in practice for speedy training of neural networks and is critical for machine learning for two reasons. First, it speeds up training by allowing higher learning rates and prevents extreme values of activations~\cite{batchnorm}. This is an important component of the parameter tuning for neural networks as there is limited ``seeing and learning'' during private training. Second, it reduces over-fitting by providing a slight regularization effect and thus improves the stability of training~\cite{batchnorm}.  In other words, private training of neural networks without batch normalization is generally difficult and requires significant pre-training. To truly enable private deep learning, efficient protocols for batch-normalization are required. Implementing batch normalization in MPC is hard for two reasons, first computing the inverse of a number is generally difficult in MPC. Second, most approximate approaches require the inputs to be within a certain range, i.e., there is a trade-off between having an approximate function for inverse of a number over a large range and the complexity of implementing it in MPC. Through our implementation, we enable batch normalization that can allow the training of complex network architectures such as AlexNet (about 60 million parameters).

\vspace{-0.5cm}
\subsubsection{Comprehensive Evaluation}
\vspace{-0.2cm}
As shown in Table~\ref{tab:comparison}, there are a number of factors involved in comparing different MPC protocols and that none of the prior works provide a holistic solution. We also thoroughly benchmark our proposed system -- we evaluate our approach over 6 different network architectures and over 3 standard datasets (MNIST, CIFAR-10, and Tiny ImageNet). We also benchmark our system in both the LAN and WAN setting, for training as well as for inference, and in both the semi-honest and actively secure adversarial models. Finally, we provide a thorough performance comparison against prior state-of-the-art works in the space of privacy preserving machine learning (including 2PC, purely for the sake of comprehensive comparison). We believe that such a comparison, across a spectrum of deployment scenarios, is useful for  MPC practitioners.

Finally, we note that the insights and techniques developed in this work are broadly applicable. For instance, ReLU is essentially a comparison function which can thus enable a number of other applications -- private computation of decision trees, privacy-preserving searching and thresholding, and private sorting.

\vspace{-0.8cm}

\section{Protocol Constructions}\label{sec:protocols}
\vspace{-0.2cm}
We begin by describing the notation used in this paper. We then describe how basic operations are performed over the secret sharing scheme and then move on to describe our protocols in detail. 

\vspace{-0.5cm}
\subsection{Notation}
\vspace{-0.2cm}

Let $\party{1}, \party{2}, \party{3}$ be the parties. We use $P_{i+1}, P_{i-1}$ to denote the next and previous party for $P_i$ (with periodic boundary conditions). In other words, next party for $\party{3}$ is $\party{1}$ and previous party for $\party{1}$ is $\party{3}$. We use $\Share{x}{m}$ to denote 2-out-of-3 replicated secret sharing (RSS) modulo $m$ for a general modulus $m$. For any $x$ let $\Share{x}{m} = (x_1, x_2, x_3)$ denote the RSS of a secret $x$ modulo $m$ i.e., $x\equiv x_1 + x_2 +x_3 \pmod{m}$, but they are otherwise random. We use the notation $\Share{x}{m}$ to mean $(x_1, x_2)$ is held by $\party{1}$, $(x_2, x_3)$ by $\party{2}$, and $(x_3, x_1)$ by $\party{3}$. We denote by $x[i]$ the $i$\tinyth{} component of a vector $x$. In this work, we focus on three different moduli $\ring = 2^{\ell},$ a small prime $\prm$, and $2$. In particular, we use $\ell = 2^5$, $\prm = 37$. We use fixed-point encoding with 13 bits of precision. In \protx{Mult} over $\mathbb{Z}_\prm$, the multiplications are performed using the same procedure with no truncation. ReLU, which compares a value with 0 in this representation corresponds to a comparison with $2^{\ell-1}$.
\vspace{-0.5cm}
\subsection{Basic Operations}\label{subsec:basicops}
\vspace{-0.2cm}
To ease the exposition of the protocols, we first describe how basic operations can be performed over the above secret sharing scheme. These operations are extensions of Boolean computations from Araki~\etal~\cite{AFLNO16} to arithmetic shares, similar to ABY$^3$~\cite{aby3}. However, ABY$^3$ relies on efficient garbled circuits for non-linear function computation which is fundamentally different than the philosophy of this work which relies on simple modular arithmetic. In this manner, we propose a hybrid integration of ideas from SecureNN and ABY$^3$.

\textbf{Correlated Randomness: }Throughout this work, we will need two basic random number generators. Both of these can be efficiently implemented (using local computation) using PRFs. We describe them below:

\noindent \textit{$\bullet$ 3-out-of-3 randomness:} Random $\alpha_1, \alpha_2, \alpha_3$ such that $\alpha_1 + \alpha_2 + \alpha_3 \equiv 0 \pmod{\ring}$ and party $P_i$ holds $\alpha_i$.

\noindent \textit{$\bullet$ 2-out-of-3 randomness:} Random $\alpha_1, \alpha_2, \alpha_3$ such that $\alpha_1 + \alpha_2 + \alpha_3 \equiv 0 \pmod{\ring}$ and party $P_i$ holds ($\alpha_{i}, \alpha_{i+1}$).

Given pairwise shared random keys $k_i$ (shared between parties $P_i$ and $P_{i+1}$), the above two can be computed as $\alpha_i = F_{k_i}(\mathsf{cnt}) - F_{k_{i-1}}(\mathsf{cnt})$ and $(\alpha_i, \alpha_{i-1}) = (F_{k_i}(\mathsf{cnt}),  F_{k_{i-1}}(\mathsf{cnt}))$ where $\mathsf{cnt}$ is a counter incremented after each invocation. This is more formally described later on in \protx{Prep} in Fig.~\ref{fig:prot-prep}.

\textbf{Linear operations: }Let $a,b,c$ be public constants and $\Share{x}{m}$ and $\Share{y}{m}$ be secret shared. Then $\Share{ax+by+c}{m}$ can be locally computed as $(ax_1+by_1+c, ax_2+by_2, ax_3 + by_3)$ and hence are simply local computations.

\textbf{Multiplications \protx{Mult}: }To multiply two shared values together $\Share{x}{m} = (x_1, x_2, x_3)$ and $\Share{y}{m} = (y_1, y_2, y_3)$, parties locally compute $z_1 = x_1y_1 + x_2y_1 + x_1y_2, z_2 = x_2y_2 + x_3y_2 + x_2y_3$ and $z_3 = x_3y_3 + x_1y_3+x_3y_1$. At the end of this, $z_1,z_2$ and $z_3$ form a 3-out-of-3 secret sharing of $\Share{z = x\cdot y}{m}$. Parties then perform \textit{resharing} where 3-out-of-3 randomness is used to generate 2-out-of-3 sharing by sending $\alpha_i + z_i$ to party $i-1$ \ch{where $\{\alpha_i\}$ form a 3-out-of-3 secret sharing of 0.} %For more details on fixed-point multiplication, semi-honest, and malicious variants of this refer to~\cite{aby3, AFLNO16}.

\textbf{Convolutions and Matrix Multiplications: }We rely on prior work to perform convolutions and matrix multiplications over secret shares. To perform matrix multiplications, we note that \protx{Mult} described above extends to incorporate matrix multiplications. To perform convolutions, we simply expand the convolutions into matrix multiplications of larger dimensions (cf Section 5.1 of~\cite{securenn}) and invoke the protocol for matrix multiplications. With fixed-point arithmetic, each multiplication protocol has to be followed by the truncation protocol (cf Fig.~\ref{fig:prot-prep}) to ensure correct fixed-point precision. For more details on fixed-point multiplication, semi-honest, and malicious variants of this refer to~\cite{aby3, FLNW17}.

\textbf{Reconstruction of $\Share{x}{m}$:} In the semi-honest setting, each party sends one ring element to the next party, i.e., $P_i$ sends share $x_i$ to $\party{i+1}$. In the malicious setting, each party sends $x_i$ to $\party{i+1}$ and $x_{i+1}$ to $\party{i-1}$ and aborts if the two received values do not agree. In either case, a single round of communication is required.

\textbf{Select Shares \protx{SS}: }We define a sub-routine \protx{SS}, which will be used a number of times in the descriptions of other functionalities. It takes as input shares of two random values $\Share{x}{\ring}$, $\Share{y}{\ring}$, and shares of a random bit $\Share{b}{2}$. The output $\Share{z}{\ring}$ is either $\Share{x}{\ring}$ or $\Share{y}{\ring}$ depending on whether $b = 0$ or $b=1$. To do this, we assume access to shares of a random bit $\Share{c}{2}$ and $\Share{c}{\ring}$ (pre-computation). Then we $\mathsf{open}$ the bit $(b\oplus c) = e$. If $e = 1$, we set $\Share{d}{\ring} = \Share{1-c}{\ring}$ otherwise set $\Share{d}{\ring} = \Share{c}{\ring}$. Finally, we compute $\Share{z}{\ring} =\Share{(y-x)\cdot d}{\ring} + \Share{x}{\ring}$ where $\Share{(y-x)\cdot d}{\ring}$ can be computed using \protx{Mult}$(y-x, d)$. 

\textbf{XOR with public bit $b$:} Given shares of a bit $\Share{x}{m}$ and a public bit $b$, we can locally compute shares of bit $\Share{y}{m} = \Share{x\oplus b}{m}$ by noting that $y = x + b - 2 b \cdot x$. Since $b$ is public, this is a linear operation and can be computed in both the semi-honest and malicious adversary models.

\textbf{Evaluating $\Share{(-1)^{\beta} \cdot x}{m}$ from $\Share{x}{m}$ and $\Share{\beta}{m}$:} We assume that $\beta \in \zo$. We first compute $\Share{1-2\beta}{m}$ and then perform the multiplication protocol described above to obtain $\Share{(1-2\beta) \cdot x}{m} = \Share{(-1)^{\beta} \cdot x}{m}$.
We split our computations into data dependent online computations and data independent offline computations. Protocols for offline computations are presented in Fig.~\ref{fig:prot-prep}. %The functionality $\Funct{Prep}$ for the pre-computation can be described using the functionalities in ABY$^3$~\cite{aby3}.

\vspace{-0.7cm}
\subsection{Private Compare}
\vspace{-0.2cm}
This function evaluates the bit \chh{$x \geq r$} where $r$ is public and the parties hold shares of bits of $x$ in $\bbZ_{\prm}$. Algorithm~\ref{algo:private-compare} describes this protocol. \ch{Note that $\beta$ is necessary for privacy as $\beta'$ reveals information about the output $(x \geq r)$ if not blinded by a random bit $\beta$. Each of the bits are independent so a single blinding bit $\beta$ is sufficient to hide computation of $(x \geq r)$ or $(r>x)$.}

%\TODO{Generating $m \in \mathbb{Z}_{\prm}^*$}
\begin{enumerate}[label=(\Alph*), itemsep=0pt, topsep=1mm]
\item Step~\ref{code:PC:beta}: $u[i]$ can be computed by first evaluating shares of $2\beta - 1$ and then computing the product of $(2\beta-1)$ and $x[i] - r[i]$. This can be done in a single round using one invocation of \protx{Mult}.
\item Steps~\ref{code:PC:XOR},\ref{code:PC:local}: These are simply local computations. For instance, $\Share{w[i]}{} = (w[i]_1,w[i]_2,w[i]_3)$ can be computed as $w[i]_j = x[i]_j + \delta_{j1} r[i] - 2 r[i] x[i]_j$ where $j \in \{1,2,3\}$ and $\delta_{ij}$ is the Kronecker delta function and is $1$ if $i=j$ and $0$ otherwise.
\item Step~\ref{code:PC:blind} can be computed in $\log_2 \ell+1$ rounds using sequential invocations of the \protx{Mult} with smaller strings (One additional round because of multiplication by random blinding factor).
\item Steps~\ref{code:PC:lasttwo1},\ref{code:PC:lasttwo2}: These are once again local computations.
\end{enumerate}
This protocol is an example of the challenges of integrating approaches based on simple modular arithmetic with malicious security. Both SecureNN and \falcon{}
aim to find if there exists an index $i$ such that $c[i] = 0$. However, the existence of a semi-honest third party makes checking this much easier in SecureNN. The two primary parties simply blind and mask their inputs and send them to the third party. This is not possible in~\ThisWork{} due to the stronger adversarial model and requires newer protocol constructions. In particular, we need to multiply all the $c[i]$'s together along with a mask in $\mathbb{Z}_p^*$ and reveal this final product to compute the answer.

\begin{algorithm}[t]
\footnotesize
\caption{Private Compare $\protx{PC}(\party{1},\party{2}, \party{3})$:}
\label{algo:private-compare}
\begin{algorithmic}[1]
\Require $\party{1}, \party{2}, \party{3}$ hold secret sharing of bits of $x$ in $\mathbb{Z}_{\prm}$.
\Ensure \ch{All parties get shares of the bit $(x \geq r) \in \mathbb{Z}_{2}$.}
\CommonR $\party{1}, \party{2}, \party{3}$ hold a public $\ell$ bit integer $r$, shares of a random bit in two rings $\Share{\beta}{2}$ and $\Share{\beta}{\prm}$ and shares of a random, secret integer $m \in \mathbb{Z}_{\prm}^*$.
%$\beta \in \mathbb{Z}_{\prm}$ and $\in \mathbb{Z}_{2}$, and a random secret shared integer $m \in \mathbb{Z}_{\prm}^*$.
%\hspace{-1.15cm}\textbf{Note:} All shares are over $\mathbb{Z}_{\prm}$. 

\vspace{0.2cm}
\For {$i = \{\ell-1, \ell - 2, \ldots , 0\}$} \label{code:PC:begin}
    \State Compute shares of $u[i] = (-1)^\beta (x[i] - r[i])$\label{code:PC:beta}
	\State Compute shares of $w[i] = x[i] \oplus r[i]$ \label{code:PC:XOR}
	\State Compute shares of $c[i] = u[i] + 1 + \sum_{k = i+1}^\ell w[k]$\label{code:PC:local}
\EndFor \label{codePC:end}
\State \ch{Compute and reveal $d := \Share{m}{p} \cdot \prod_{i=0}^{\ell-1} c[i] \pmod{\prm}$\label{code:PC:blind}}
\State \ch{Let $\beta' = 1$ if $(d \neq 0)$ and $0 $ otherwise.}\label{code:PC:lasttwo1}
\State \Return Shares of $\beta' \oplus \beta \in \mathbb{Z}_2$\label{code:PC:lasttwo2} 
\end{algorithmic}
\end{algorithm}

\vspace{-0.5cm}
\subsection{Wrap Function}\label{subsec:wrapfunction}
\vspace{-0.2cm}
Central to the computation of operations such as ReLU and DReLU is a comparison function. The wrap functions, $\wa{2}$ and $\wa{3}$ are defined below as a function of the secret shares of the parties and effectively compute the ``carry bit'' when the shares are added together as integers. Eq.~\ref{eq:drelu_insight} shows that DReLU can be easily computed using the $\wa{3}$ function. So all we require is a secure protocol for $\wa{3}$. We define two similar functions called ``wrap'' (denoted by $\wa{2}$ and $\wa{3}$). We call Eq.~\ref{eq:wrap3exact} the \textit{exact wrap} function. For the remainder of the paper, we use the $(\mathrm{mod}~2)$ reduction of the wrap function defined in Equation~\ref{eq:wrap3} and refer to it as simply the wrap function.
%One function takes three inputs and the other one takes four inputs and are formally defined as follows:
\begin{align}
\wa{2}(a_1, a_2, L) &= \begin{cases} \label{eq:wrap2}
0 &\text{if \ $a_1+a_2 < L$} \\
1 &\text{Otherwise}\\
\end{cases}
\\
\wa{3e}(a_1, a_2, a_3, L) &= \begin{cases} \label{eq:wrap3exact}
0 &\text{if $\sum_{i=1}^3 a_i < L$} \\
1 &\text{if $L \leq \sum_{i=1}^3 a_i < 2L$}\\
2 & \text{if $2L \leq \sum_{i=1}^3 a_i < 3L$}
\end{cases}
\end{align}
\vspace{-7mm}
\begin{equation} \label{eq:wrap3}
\wa{3}(a_1, a_2, a_3, L) = \wa{3e}(a_1, a_2, a_3, L)~(\mathrm{mod}~2)
\end{equation}

Next we briefly describe the connection between $\wa{3}$ computed on shares $a_1, a_2, a_3$ and the most significant bit (MSB) of the underlying secret $a$. Note that $a = a_1 + a_2 + a_3 \pmod{L}$ as $a_i$'s are shares of $a$ modulo $L$. Considering this sum as a logic circuit (for instance as a ripple carry adder), we can see that $\mathrm{MSB}(a) = \mathrm{MSB}(a_1) + \mathrm{MSB}(a_2) + \mathrm{MSB}(a_3) + c \pmod{2}$ where $c$ is the carry bit from the previous index. The key insight here is that the carry $c$ from the previous index is simply the $\wa{3}$ function computed on $a_i$'s (ignoring their MSB's) modulo $L/2$ (this is evident from Eq.~\ref{eq:wrap3exact}). And this last operation is synonymous with computing the $\wa{3}$ function on $2a_i$'s modulo $L$. We will further describe the consequences of this connection in Section~\ref{subsec:reluanddrelu} where we describe a protocol to compute the ReLU and DReLU functions. Algorithm~\ref{algo:wrap} gives the protocol for securely computing the $\wa{3}$ function. 

\begin{algorithm}[t]
\footnotesize
\caption{wrap$_3$ $\protx{WA}(\party{1},\party{2}, \party{3})$:}
\label{algo:wrap}
\begin{algorithmic}[1]
\Require $\party{1}, \party{2}, \party{3}$ hold shares of $a$ in $\mathbb{Z}_{\ring}$.
\Ensure $\party{1}, \party{2}, \party{3}$ get shares of a bit $\theta = \wa{3}(a_1, a_2, a_3, L)$
\CommonR \ch{$\party{1}, \party{2}, \party{3}$ hold shares $\Share{x}{\ring}$ (of a random number $x$), $\Share{x[i]}{\prm}$ (shares of bits of $x$) and $\Share{\alpha}{2}$ where $\alpha = \wa{3}(x_1, x_2, x_3, L)$.}

\vspace{0.2cm}
\State \label{code:wrapstart}\ch{Compute $r_j \equiv a_j + x_j \pmod{L}$ and $\beta_j = \wa{2}(a_j, x_j, L)$}
\State \ch{Reconstruct $r \equiv \sum r_j \pmod{L}$}
\State \ch{Compute $\delta = \wa{3}(r_1, r_2, r_3, L)$ \Comment{In the clear}}
\State \ch{Run $\protx{PC}$ on $x, r+1$ to get $\eta = (x \geq r+1)$.\label{code:wrapend}}
\State \Return $\theta = \beta_1 + \beta_2 + \beta_3 + \delta - \eta - \alpha$
\end{algorithmic}
\end{algorithm}

%\begin{algorithm}[t]
%\footnotesize
%\caption{wrap$_3$ $\protx{WA}(\party{1},\party{2}, \party{3})$:}
%\label{algo:wrap}
%\begin{algorithmic}[1]
%\Require $\party{1}, \party{2}, \party{3}$ hold shares of $a$ in $\mathbb{Z}_{\ring}$.
%\Ensure $\party{1}, \party{2}, \party{3}$ get shares of a bit $\theta = \wa{3}(a_1, a_2, a_3, L)$
%\CommonR $\party{1}, \party{2}, \party{3}$ hold shares $\Share{r}{\ring}$ (random number), $\Share{r[i]}{\prm}$ (shares of bits of $r$) and $\Share{\alpha}{2}$ where  $\alpha = \wa{3}(r_1, r_2, r_3, L)$.
%
%\vspace{0.2cm}
%\State \label{code:wrapstart}Compute $x_j \equiv a_j + r_j \pmod{L}$ and $\beta_j = \wa{2}(a_j, r_j, L)$
%\State Reconstruct $x \equiv \sum x_j \pmod{L}$
%\State Compute $\delta = \wa{3}(x_1, x_2, x_3, L)$ \Comment{In the clear}
%\State Run $\protx{PC}$ on $x, r$ to get $\eta = (r > x)$.\label{code:wrapend}
%\State \Return $\theta = \beta_1 + \beta_2 + \beta_3 + \delta - \eta - \alpha$
%\end{algorithmic}
%\end{algorithm}

Note that $\wa{2}$ function is always computed locally and hence a secure algorithm is not needed for the same. \ch{Furthermore, note that the $\wa{2}$ function allows us to write exact integer equations as follows: if $a \equiv a_1 + a_2 \pmod{L}$ then $a = a_1 + a_2 - \wa{2}(a_1, a_2, L) \cdot L$ where the former relation is a congruence relation but the latter is an integer relation (and has exact equality). Finally, to see the correctness of the $\wa{3}$ protocol, in reference to Algorithm~\ref{algo:wrap}, we can write the following set of equations 
\begin{align}
r &= a+x -\eta \cdot \ring \label{corr:1}\\
r & = r_1 + r_2 + r_3 - \delta_e \cdot \ring \label{corr:2}\\
r_i &= a_i + x_i - \beta_i \cdot \ring ~~~~~ \forall i \in \{1,2,3\} \label{corr:3}\\
x &= x_1 + x_2 + x_3 - \alpha_e \cdot \ring \label{corr:4}
\end{align}}%
\noindent where $\delta_e, \alpha_e$ denote the exact wrap functions, Eq.~\ref{corr:2},\ref{corr:4} follow from the definition of the exact wrap function, while Eq~\ref{corr:3} \chh{follows from the definition of $\wa{2}$ function. To see Eq.~\ref{corr:1}, note that $r, a, x \in [0, L-1]$ and that $r\equiv a + x \pmod{L}$. Hence $a + x \geq L$ iff $r < x$ (or $x \geq r+1$).} Finally, assuming $\theta_e$ is the exact wrap function on $a_1, a_2, a_3$ i.e.,  
\begin{align}
a & = a_1 + a_2 + a_3 - \theta_e \cdot \ring ~~~~~ ~~~~~~~~~\label{corr:target}
\end{align}
Eqs.~\ref{corr:1}-\ref{corr:target} together give a constraint among the Greek symbols (in other words, (\ref{corr:1}) - (\ref{corr:2}) - (\ref{corr:3}) + (\ref{corr:4}) + (\ref{corr:target}) gives Eq.~\ref{eq:mainwrap} below)
\begin{equation}\label{eq:mainwrap}
    \theta_e = \beta_1 + \beta_2 + \beta_3 + \delta_e - \eta - \alpha_e ~~~~~~~
\end{equation}
Reducing Eq.~\ref{eq:mainwrap} modulo 2 gives us $\theta = \beta_1 + \beta_2 + \beta_3 + \delta - \eta - \alpha$ which is used to compute $\wa{3}$ as in Algorithm~\ref{algo:wrap}.

\vspace{-0.5cm}
\subsection{ReLU and Derivative of ReLU}\label{subsec:reluanddrelu}
\vspace{-0.2cm}
We now describe how to construct a protocol for securely computing $\mathrm{ReLU}(a)$ and $\mathrm{DReLU}(a)$ for a given secret $a$. Recall that we use fixed point arithmetic over $\bbZ_{2^{\ell}}$ for efficiency reasons. Using the natural encoding of native \texttt{C++} data-types, we know that positive numbers are the first $2^{\ell-1}$ and have their most significant bit equal to 0. Negative numbers, on the other hand are the last $2^{\ell-1}$ numbers in the $\ell$-bit range and have their most significant bit equal to 1. Thus, the DReLU function defined by Eq.~\ref{eq:dreluformula}, has a simple connection with the most significant bit (MSB) of the fixed point representation viz., $\mathrm{DReLU}(a) = 1 - \mathrm{MSB}(a)$. Furthermore, in Section~\ref{subsec:wrapfunction}, we have seen the connection between $\mathrm{MSB}(a)$ and $\wa{3}$. Together, these insights can be distilled into the following equation:
%The ReLU function on a secret $a$ is defined as the $\max (a, 0)$ while the derivative of ReLU (DReLU) is simply the indicator bit whether the secret is positive or negative. We consider fixed-point arithmetic over $\bbZ_{2^{\ell}}$ for efficiency reasons. Using the natural encoding of native \texttt{C++} data-types, we know that positive numbers are the first $2^{\ell-1}$ and negative numbers are the last $2^{\ell-1}$ numbers in the $\ell$-bit range. In other words, the MSB of the $\ell$-bit representation of positive numbers is 0 and that of negative numbers is 1. This close connection between DReLU and the MSB enables us to write the following:
\begin{align}\label{eq:drelu_insight}
   \begin{split}
		\mathsf{DReLU(a)} = \mathsf{MSB}(a_1) \oplus \mathsf{MSB}(a_2) \oplus \mathsf{MSB}(a_3) \\
		\oplus~ \wa{3}(2a_1, 2a_2, 2a_3, L) \oplus 1
   \end{split}
\end{align}
In particular, Derivative of ReLU can be computed by combining the output of the wrap function with local computations. Finally, for computing ReLU from DReLU, we simply call \protx{SS} (which effectively performs \protx{Mult} on shares of $a$ and shares of $\mathsf{DReLU}(a)$). With these observations, we can implement the ReLU and Derivative of ReLU protocols (see Algorithm~\ref{algo:relu}). Note that the approach here is crucially different from the approach SecureNN uses due to use of fundamentally different building blocks as well as deeper mathematical insights such as Eq.~\ref{eq:drelu_insight}. To achieve the DReLU functionality, SecureNN first uses a subroutine to transform the shares of the secret into an odd modulus ring and then uses another subroutine to compute the MSB (cf Section~\ref{subsec:techcontrib}). Both these subroutines have similar complexities. \ThisWork{} on the other hand uses the insight presented in Eq.~\ref{eq:drelu_insight} to completely eliminate the need for these subroutines, improving the efficiency by about $2 \times$ and simplifying the overall protocol. This also drastically improves the end-to-end performance (by over $6.4\times$) as the ReLU and DReLU functionalities are the building blocks of every comparison in the network.

\begin{algorithm}[t]
\footnotesize
\caption{ReLU, $\Pi_{\mathsf{{ReLU}}}(P_{1}, P_{2}, P_{3})$:}
\label{algo:relu}
\begin{algorithmic}[1]

\Require $\party{1}, \party{2}, \party{3}$ hold shares of $a$ in $\mathbb{Z}_{\ring}$.
\Ensure $\party{1}, \party{2}, \party{3}$ get shares of $\mathsf{ReLU}(a).$ %given by the RHS of Eq.~\ref{eq:wrap3}
% \Require Parties hold $\Share{a}{\ring}$.
% \Ensure Parties get $\Share{\mathrm{ReLU}(a)}{\ring}$. 
\CommonR $\Share{c}{2}$ and $\Share{c}{\ring}$ (shares of a random bit in two rings)
%\hspace{-1.15cm}\textbf{Common Randomness:} $\Share{c}{2}$ and $\Share{c}{\ring}$ (shares of a random bit in different rings)
%Shares of bit $c \in \mathbb{Z}_2$ and $c \in \mathbb{Z}_{\ring}$. %$\party{0}, \party{1}, \party{2}$ hold shares of random bit $\Share{c}{2}$ and $\Share{m_c}{\ring}$ where $m_i$ for $i \in \{0,1\}$ are random shares of $i \in  \mathbb{Z}_{\ring}$.

\vspace{0.2cm}
\State Run $\protx{WA}$ to get $\wa{3}(2a_1, 2a_2, 2a_3, L)$
\State Compute $\Share{b}{2}$ where $b= \mathsf{DReLU}(a)$ \Comment{Local comp. (Eq.~\ref{eq:drelu_insight})}
\State \Return Output of \protx{SS} run on $\{a, 0\}$ with $b$ as selection.
%\State Reconstruct $e = b \oplus c$
%\If {$e=1$}
%\State Set $\Share{m_b}{\ring} = \Share{1 - c}{\ring}$
%\Else 
%\State Set $\Share{m_b}{\ring} = \Share{c}{\ring}$
%\EndIf
%\State \Return Output of \protx{Mult} on inputs $(a, m_b)$
\end{algorithmic}

\end{algorithm}
% 	\If{$f = 1$}
% 		\State \Return \protx{Mult}$(a, m_c)$
% 	\Else
% 		\State \Return \protx{Mult}$(a, 1 - m_c)$ 
% 	\EndIf

\vspace{-0.5cm}
\subsection{Maxpool and Derivative of Maxpool}
\vspace{-0.2cm}
The functionality of maxpool simply takes as input a vector of secret shared values and outputs the maximum value. For derivative of maxpool, we need a one-hot vector of the same size as the input where the 1 is at the location of the index of the maximum value. Maxpool can be implemented using a binary sort on the vector of inputs and small amounts of bookkeeping, where the comparisons can be performed using ReLUs. Derivative of maxpool can be efficiently implemented along with maxpool described in Algorithm~\ref{algo:maxpool}. %Note that Steps~\ref{algo:sketchy1}-\ref{algo:sketchy2} have to implemented privately and can be done using \protx{SS} as described in Section~\ref{subsec:basicops}.
\begin{algorithm}
\footnotesize
\caption{Maxpool, $\Pi_{\mathsf{{Maxpool}}}(P_{1}, P_{2}, P_{3})$:}
\label{algo:maxpool}
\begin{algorithmic}[1]

\Require $\party{1}, \party{2}, \party{3}$ hold shares of $a_1, a_2, \hdots a_n$ in $\mathbb{Z}_{\ring}$.
\Ensure $\party{1}, \party{2}, \party{3}$ get shares of $a_k$ and $\bm{e_k}$ where $k = \mathsf{argmax} \{a_1, \hdots a_n\}$ and where $\bm{e_k} = \{e_1, e_2, \hdots e_n\}$ with $e_i = 0 ~ \forall i \neq k$ and $e_k = 1$. 
\CommonR No additional common randomness required.

\vspace{0.2cm}
\State Set $\mathsf{max} \leftarrow a_1$ and $\bm{\mathsf{ind}} \leftarrow \bm{e_1} = \{1,0,\hdots, 0\}$
\For {$i = \{2, 3, \hdots n\}$} 
	\State Set $\mathsf{d_{max}} \leftarrow (\mathsf{max} - a_i)$ and $\bm{\mathsf{d_{ind}}} \leftarrow (\bm{\mathsf{ind} - e_i})$
	\State $b \leftarrow \protx{\mathsf{DReLU}}(\mathsf{d_{max}})$ \Comment{b $\rightarrow$ Derivative of ReLU}
	\State $\mathsf{max} \leftarrow$ \protx{SS} on inputs $\{a_i,  \mathsf{max}, b\}$.
	\State $\bm{\mathsf{ind}} \leftarrow$ \protx{SS} on inputs $\{\bm{e_i}, \bm{\mathsf{ind}}, b\}$.
	%\State Set $\mathsf{max}$ as \protx{SS} output on inputs $\{a_i,  \mathsf{d_{max}} + a_i\}$ using selection $b$.
	%\State Set $\bm{\mathsf{ind}}$ as \protx{SS} output on inputs $\{\bm{e_i}, \bm{\mathsf{d_{ind}}} + \bm{e_i}\}$ using selection $b$.
%	\If {b = 0} \label{algo:sketchy1}
%		\State $\mathsf{max} ~= a_i$ and $\bm{\mathsf{ind}} ~= \bm{e_i}$ 
%	\Else
%		\State $\mathsf{max} ~= \mathsf{d_{max}} + a_i$ and $\bm{\mathsf{ind}} ~= \bm{\mathsf{d_{ind}}} + \bm{e_i}$ 
%	\EndIf \label{algo:sketchy2}
%	\State $\mathsf{max} ~=$ \protx{SS} $(\{0,\mathsf{d_{max}}\}, b) + a_i$  
%	\State $\bm{\mathsf{ind}} ~=$ \protx{SS} $(\{0,\bm{\mathsf{d_{ind}}}\}, b) + \bm{e_i}$ 
\EndFor 
\State \Return $\mathsf{max}, \bm{\mathsf{ind}}$
\end{algorithmic}
\end{algorithm}

\vspace{-0.5cm}
\subsection{Division and Batch Normalization}
\vspace{-0.2cm}
Truncation allows parties to securely eliminate lower bits of a secret shared value (i.e., truncation by $k$ bits of a secret $a \rightarrow a/2^k$). However, the problem of dividing by a secret shared number is considerably harder and efficient algorithms rely on either (1) sequential comparison (2) numerical methods. In this work, we use the numerical methods approach for its efficiency. We use the specific choices of initializations given in~\cite{catrina2010secure, aliasgari2013secure} to efficiently compute division over secret shares. A crucial component of numerical methods is the need to estimate the value of the secret within a range. We achieve this using Algorithm~\ref{algo:power}.
Note that Algorithm~\ref{algo:power} outputs the bounding power of 2, which is also what is guaranteed by the functionality. In this way, we only reveal the bounding power of 2 and nothing else.
\begin{algorithm}[t]
\footnotesize
\caption{Bounding Power, $\Pi_{\mathsf{{Pow}}}(P_{1}, P_{2}, P_{3})$:}
\label{algo:power}
\begin{algorithmic}[1]

\Require \ch{$\party{1}, \party{2}, \party{3}$ hold shares of $x$ in $\mathbb{Z}_{\ring}$.}
\Ensure \chh{$\party{1}, \party{2}, \party{3}$ get $\alpha$ in the clear, where $2^{\alpha} \leq x < 2^{\alpha+1}$.}
\CommonR No additional common randomness required.

\vspace{0.2cm}
\State \ch{Initialize $\alpha \leftarrow 0$ \label{code:Div1}}
\For {\chh{$i = \{\ell-1, \hdots , 1, 0\}$}}
%	\State \ch{Set $d_y \leftarrow (y - 2^{2^i + \alpha})$}
	\State \ch{$c \leftarrow \protx{\mathsf{DReLU}}(\mathsf{x - 2^{2^i + \alpha}})$ and reconstruct $c$}
	\State \chh{Set $\alpha \leftarrow \alpha + 2^i$ if $c=1$}
%	\If {$c=1$}
%	\State \chh{$\alpha \leftarrow \alpha + 2^i$}
%	\EndIf
\EndFor \label{code:Div2}
\State \Return $\alpha$
\end{algorithmic}
\end{algorithm}

%\begin{algorithm}[t]
%\footnotesize
%\caption{Power, $\Pi_{\mathsf{{Pow}}}(P_{1}, P_{2}, P_{3})$:}
%\label{algo:power}
%\begin{algorithmic}[1]
%
%\Require \ch{$\party{1}, \party{2}, \party{3}$ hold shares of $x$ in $\mathbb{Z}_{\ring}$.}
%\Ensure \chh{$\party{1}, \party{2}, \party{3}$ get $\alpha$ in the clear, where $2^{\alpha} \leq x < 2^{\alpha+1}$.}
%\CommonR No additional common randomness required.
%
%\vspace{0.2cm}
%\State \ch{Set $y \leftarrow x$ and $\alpha \leftarrow 0$ \label{code:Div1}}
%\For {\chh{$i = \{\ell-1, \hdots , 2, 1, 0\}$}}
%	\State \ch{Set $d_y \leftarrow (y - 2^{2^i + \alpha})$}
%	\State \ch{$c \leftarrow \protx{\mathsf{DReLU}}(d_y)$ and reconstruct $c$}
%	\If {$c=1$}
%	\State \ch{Set $y \leftarrow d_y$ and $\alpha \leftarrow \alpha + 2^i$}
%	\EndIf
%\EndFor \label{code:Div2}
%\State \Return $\alpha$
%\end{algorithmic}
%\end{algorithm}

Algorithm~\ref{algo:division} is used to compute the value of $a/b$ where $a,b$ are secret shared. The first step for the algorithm is to transform $b \rightarrow x$ where $x \in [0.5, 1)$. Note that even though $b$ is a fixed-point precision of $f_p$, \chh{for the computations in Algorithm~\ref{algo:division}, $x$ has to be interpreted as a value with fixed-point precision $\alpha + 1$ where $2^{\alpha} \leq b < 2^{\alpha+1}$}. Thus we first need to extract $\alpha$ (the appropriate range) using Algorithm~\ref{algo:power}. Let $w_0 = 2.9142 - 2x$, $\epsilon_0 = 1 - x \cdot w_0$ (cf.~\cite{catrina2010secure, aliasgari2013secure} for choice of constants). Then an initial approximation for $1/x$ is $w_0 \cdot (1+\epsilon_0)$. For higher order approximations, set $\epsilon_i = \epsilon_{i-1}^2$ and multiply the previous approximate result by $(1+\epsilon_i)$ to get a better approximate result. Each successive iteration increases the round complexity by 2. For our value of fixed-point precision, we use the following approximation which works with high accuracy (refer to Section~\ref{sec:eval} for details):\vspace{-4mm}
\begin{algorithm}
\footnotesize
\caption{Division, $\Pi_{\mathsf{{Div}}}(P_{1}, P_{2}, P_{3})$:}
\label{algo:division}
\begin{algorithmic}[1]

\Require $\party{1}, \party{2}, \party{3}$ hold shares of $a, b$ in $\mathbb{Z}_{\ring}$.
\Ensure $\party{1}, \party{2}, \party{3}$ get shares of $a/b$ in $\mathbb{Z}_{\ring}$ computed as integer division with a given fixed precision $f_p$.
\CommonR No additional common randomness required.
%\hspace{-1.15cm}\textbf{Common Randomness:} No additional common randomness required.

\vspace{0.2cm}
% \State Set $\mathsf{x} \leftarrow b$ and $\alpha \leftarrow 0$ \label{code:Div1}
% \For {$i = \{\ell-1, \hdots , 2, 1\}$} 
% 	\State Set $d_x \leftarrow (x - 2^{2^i + \alpha})$
% 	\State $c \leftarrow \protx{\mathsf{DReLU}}(\mathsf{d_{x}})$ and reconstruct $c$
% 	\State If $c = 1$ set $x \leftarrow d_x$ and $\alpha \leftarrow \alpha + 2^i$
% \EndFor \label{code:Div2}
\State Run \protx{Pow} on $b$ to get $\alpha$ such that \chh{$2^{\alpha} \leq b < 2^{\alpha+1}$}
\State Compute $w_0 \leftarrow 2.9142 - 2b$
\State Compute $\epsilon_0 \leftarrow 1 - b \cdot w_0$ and $\epsilon_1 \leftarrow \epsilon_0^2$ %and $\epsilon_2 \leftarrow \epsilon_1^2$
\State \Return $a w_0 (1 + \epsilon_0)(1+\epsilon_1)$%(1+\epsilon_2)$
\end{algorithmic}
\end{algorithm}
\vspace{-8mm}

\begin{algorithm}[h]
\footnotesize
\caption{Batch Norm, $\Pi_{\mathsf{{BN}}}(P_{1}, P_{2}, P_{3})$:}
\label{algo:batchnorm}
\begin{algorithmic}[1]

\Require $\party{1}, \party{2}, \party{3}$ hold shares of $a_1, a_2 \hdots a_m$ in $\mathbb{Z}_{\ring}$ where $m$ is the size of each batch and shares of two learnable parameters $\gamma, \beta$. 
\Ensure $\party{1}, \party{2}, \party{3}$ get shares of $\gamma z_i + \beta$ for $i \in [m]$ and $z_i = (a_i - \mu)/(\sqrt{\sigma^2 + \epsilon})$ where $\mu = 1/m \sum a_i$, $\sigma^2 = 1/m \sum (a_i - \mu)^2$, and $\epsilon$ is a set constant.
\CommonR No additional common randomness required.
%\hspace{-1.15cm}\textbf{Common Randomness:} No additional common randomness required.

\vspace{0.2cm}
\State Set $\mu \leftarrow 1/m \cdot \sum a_i$ 
\State Compute $\sigma^2 \leftarrow 1/m \cdot \sum (a_i -\mu)^2$ and let $b = \sigma^2 + \epsilon$
\State Run \protx{Pow} on $b$ to find $\alpha$ such that \chh{$2^{\alpha} \leq  b < 2^{\alpha+1}$}
\State Set $x_0 \leftarrow 2^{-\round{\alpha/2}}$
\For {$i \in 0, \hdots, 3$}
	\State Set $x_{i+1} \leftarrow \frac{x_{i}}{2}(3 - b x_{i}^2)$
\EndFor
\State \Return $\gamma \cdot x_{\mathsf{rnds}} \cdot (a_i - \mu) + \beta$ for $i \in [m]$
\end{algorithmic}
\end{algorithm}
\vspace{-6mm}

%\mathsf{AppDiv}(x) &= w_0 \cdot (1+\epsilon_0)(1+\epsilon_1)(1+\epsilon_2) \approx \frac{1}{x}
\begin{equation}\label{eq:appdiv}
\begin{aligned}
\mathsf{AppDiv}(x) &= w_0 \cdot (1+\epsilon_0)(1+\epsilon_1) \approx \frac{1}{x}
\end{aligned}
\end{equation}
Batch-norm is another important component of neural network architectures. They improve the convergence as well as help automate the training process. Algorithm~\ref{algo:batchnorm} describes the protocol to compute batch-norm.
For step 3, we use Newton's method and use $2^{-\round{\alpha/2}}$ as an initial approximation of $1/\sqrt{\sigma^2 + \epsilon}$, where \chh{$2^{\alpha} \leq \sigma^2 + \epsilon < 2^{\alpha+1}$} and use the successive iterative formula:
\begin{equation}\label{eq:invsqrt}
    x_{n+1} = \frac{x_n}{2} \left( 3- ax_n^2 \right)
\end{equation}
Given the choice of initial guess, 4 rounds are sufficient for a close approximation with our choice of fixed-point precision. However, batch normalization during training is computed by sequentially computing $\sqrt{\sigma^2+\epsilon}$ and then computing the inverse. This approach is used to optimize the computation required during back-propagation which requires the values of $\sqrt{\sigma^2+\epsilon}$. For computing the square root of a value $a$, we use Newton's method given by Eq.~\ref{eq:sqrt}. %Alternatively, we can use the iterative approach given by Eq.~\ref{eq:sqrt} and
This can then be used in conjunction with the inverse computation given by Eq.~\ref{eq:appdiv} to complete the batch-norm computations.
\begin{equation}\label{eq:sqrt}
    x_{n+1} = \frac{1}{2} \left( x_n + \frac{a}{x_n} \right)
\end{equation}

\vspace{-0.8cm}

%\vspace{-0.3cm}
\section{Theoretical Analysis}\label{sec:theoretical}
\vspace{-0.2cm}
We provide theoretical analysis of our framework and protocols. In particular, we provide proofs of security and analyze the theoretical complexity.

\vspace{-0.7cm}
\subsection{Security Proofs}\label{sec:proofs}
\vspace{-0.2cm}
We model and prove the security of our construction in the real world-ideal world simulation paradigm~\cite{gmw87,JC:Canetti00,Can01}. In the real interaction, the parties execute the protocol in the presence of an adversary and the environment. On the other hand, in the ideal interaction, the parties send their inputs to a trusted party that computes the functionality truthfully. Finally, to prove the security of our protocols, for every adversary in the real interaction, there exists a simulator in the ideal interaction such that the environment cannot distinguish between the two scenarios. In other words, whatever information the adversary extracts in the real interaction, the simulator can extract it in the ideal world as well.

We show that our protocols are perfectly secure (i.e., the joint distributions of the inputs, outputs, and the communication transcripts are exactly the same and not statistically close) in the stand-alone
model (i.e., protocol is executed only once), and that they have a straight-line black-box simulators (i.e., only assume oracle access to the adversary and hence do no rewind). We then rely on the result of Kushilevitz~\etal~\cite{kushilevitz2010information} to prove that our protocols are secure under concurrent general composition (Theorem 1.2 in~\cite{kushilevitz2010information}).

Due to space constraints, we formally describe the functionalities in Appendix~\ref{app:functionalities}. We describe simulators for \protx{PC} (Fig.~\ref{FPC}), \protx{WA} (Fig.~\ref{FWA}), \protx{ReLU} (Fig.~\ref{FRELU}), \protx{Maxpool} (Fig.~\ref{FMAX}), \protx{Pow} (Fig.~\ref{FPOW}), \protx{Div} (Fig.~\ref{FDIV}), and \protx{BN} (Fig.~\ref{FBN}) that achieve indistinguishability. $\Funct{Mult}, \Funct{Trunc}, \Funct{Reconst}$ are identical to prior works~\cite{aby3, FLNW17}.
We prove security using the standard indistinguishability argument. To prove the security of a particular functionality, we set-up hybrid interactions where the sub-protocols used in that protocol are replaced by their corresponding ideal functionalities and then prove that the interactions can be simulated. This hybrid argument in effect sets up a series of interactions $I_0, I_1, \hdots I_k$ for some $k$ where $I_0$ corresponds to the real interaction and $I_k$ corresponds to the ideal interaction. Each neighboring interaction, i.e., $I_i, I_{i+1}$ for $i \in \{0, \hdots, k-1\}$ is then shown indistinguishable from each other, in effect showing that the real and ideal interactions are indistinguishable. Without loss of generality, we assume that party $P_2$ is corrupt. In the \textit{real world}, the adversary $\Adv$ interacts with the honest parties $P_0$ and $P_1$. In the \textit{ideal world}, the simulator interacts with the adversary and simulates exact transcripts for interactions between the adversary $\Adv$ and $P_0, P_1$. On the other hand, the simulator should be able to extract the adversaries inputs (using the values for the inputs of the honest parties in the internal run and the fact that each honest party has one component of the replicated secret sharing). These inputs are fed to the functionality to generate correct output distributions, thus achieving security against malicious adversaries. Theorems~\ref{thm:PC}-\ref{thm:divbn} gives the indistinguishability of these two interactions.

\begin{theorem}\label{thm:PC}
\protx{PC} securely realizes $\Funct{PC}$ with abort, in the presence of one malicious party in the $(\Funct{Mult},\Funct{Reconst},\Funct{Prep})$-hybrid model.
\end{theorem}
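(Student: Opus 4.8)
The plan is to exhibit, for any adversary $\Adv$ corrupting (WLOG) $P_2$, a straight-line black-box simulator $\Simu{PC}$ that interacts with $\Funct{PC}$ and produces a transcript perfectly indistinguishable from the real execution of \protx{PC} in the $(\Funct{Mult},\Funct{Reconst},\Funct{Prep})$-hybrid model. First I would observe that the entire protocol, except for the single \protx{Reconst} call in Step~\ref{code:PC:blind}, consists of local computations (Steps~\ref{code:PC:XOR},\ref{code:PC:local},\ref{code:PC:lasttwo1},\ref{code:PC:lasttwo2}) and calls to $\Funct{Mult}$ (Steps~\ref{code:PC:beta},\ref{code:PC:blind}); so in the hybrid model the simulator's only real job is to (i) play the ideal functionalities $\Funct{Prep}$, $\Funct{Mult}$, $\Funct{Reconst}$ toward $\Adv$, extracting the shares $\Adv$ submits, (ii) simulate the one opened value $d$, and (iii) translate any inconsistency detected by $\Funct{Reconst}$ into an abort, and otherwise feed extracted inputs to $\Funct{PC}$.

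The key steps, in order: (1) The simulator runs $\Funct{Prep}$ internally, so it knows the honest parties' shares of $\beta$ (in $\bbZ_2$ and $\bbZ_\prm$) and of the mask $m\in\bbZ_\prm^*$, as well as $\Adv$'s shares; since \protx{PC} takes replicated shares of the bits $x[i]$ as input, the simulator obtains $P_2$'s components from the caller and can reconstruct $x$ and hence the true bit $(x\ge r)$ in its head. (2) It emulates $\Funct{Mult}$ for Steps~\ref{code:PC:beta} and \ref{code:PC:blind}: each $\Funct{Mult}$ call receives $\Adv$'s input shares (which the simulator records, checking well-formedness and forcing abort on malformed input exactly as the real $\Funct{Mult}$ would), and returns freshly random shares, which is a perfect simulation because in the real protocol the reshared output of multiplication is uniformly random subject to the constraint. (3) For the \protx{Reconst} of $d$ in Step~\ref{code:PC:blind}: in the real world $d = m\cdot\prod_i c[i] \bmod \prm$; since $m$ is uniform in $\bbZ_\prm^*$ and independent of $\Adv$'s view, $d$ is uniform in $\bbZ_\prm^*$ when $\prod_i c[i]\neq 0$ (equivalently when $\beta'\oplus\beta$ would give one value) and $d=0$ otherwise — and which case occurs is determined by $(x\ge r)\oplus\beta$, with $\beta$ a uniform bit also hidden from $\Adv$. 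Crucially, the simulator does \emph{not} know $\beta$ is free to choose: it learns the output bit $o=(x\ge r)\oplus\beta$ only after it has the inputs, but it may \emph{sample} $\beta$ itself (it ran $\Funct{Prep}$), compute the correct $\prod c[i]$ from the reconstructed $x$, and thus produce the genuinely correct $d$ — so this is in fact an exact, not merely statistically close, simulation. Then it simulates $\Funct{Reconst}$ delivering $d$; if $\Adv$ sends inconsistent shares into $\Funct{Reconst}$, the simulator sends $\Abort$ to $\Funct{PC}$ and the honest parties, matching the real abort. (4) Finally, the simulator sends the extracted $x$-shares (hence $x$) and $r$ to $\Funct{PC}$, which hands back shares of $(x\ge r)$; the simulator already forced the honest-party output shares to be consistent with the random local value $\beta'\oplus\beta$ it produced, so the joint distribution of $\Adv$'s view and the honest outputs matches.

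The main obstacle I expect is the Step~\ref{code:PC:blind} opening: one must argue that revealing $d$ leaks \emph{nothing beyond} the bit $\beta'$, and that $\beta'$ is exactly blinded by $\beta$ so the returned share $\beta'\oplus\beta$ carries no information — this is where the correctness argument (every $c[i]\neq 0$ for all $i$ iff $(x<r)$, i.e.\ the SecureNN-style ``$\exists i: c[i]=0$'' test, adapted to the product-with-mask trick) has to be made airtight over $\bbZ_\prm$, including the subtlety that $m\in\bbZ_\prm^*$ makes the product zero-or-unit-uniform rather than zero-or-uniform. A secondary subtlety is handling a malicious $\Adv$ that feeds wrong shares into the $\Funct{Mult}$ calls: because we are in the $\Funct{Mult}$-hybrid and these are \emph{secure-with-abort} ideal functionalities on replicated sharings, any deviation either is caught (abort, simulated identically) or merely corresponds to $\Adv$ having chosen a different legitimate input, which the simulator extracts and forwards to $\Funct{PC}$ — so no soundness gap arises. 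I would close by noting the simulator is straight-line and makes only black-box oracle use of $\Adv$, and that every distribution produced is identical (perfect security in the stand-alone model), so Theorem~1.2 of~\cite{kushilevitz2010information} applies.
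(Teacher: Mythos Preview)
Your overall strategy is the same as the paper's --- a straight-line simulator that emulates $\Funct{Prep}$, $\Funct{Mult}$, and $\Funct{Reconst}$ toward the adversary, extracts the corrupt party's shares, and forwards them to $\Funct{PC}$ (aborting if any sub-functionality aborts). Where you diverge is in how you propose to simulate the single opened value $d$, and there the argument has a real gap.

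You write that ``the simulator obtains $P_2$'s components from the caller and can reconstruct $x$ and hence the true bit $(x\ge r)$ in its head,'' and later that it may ``compute the correct $\prod c[i]$ from the reconstructed $x$.'' This is not possible. In 2-out-of-3 replicated sharing the corrupt party $P_2$ holds only $(x[i]_2,x[i]_3)$; the missing component $x[i]_1$ lives exclusively with the honest parties, and in the ideal world the simulator never sees honest inputs. So the simulator cannot recover $x$, cannot compute $(x\ge r)$, and cannot compute ``the genuinely correct $d$.'' The paper does \emph{not} claim to reconstruct $x$; it only extracts $P_2$'s shares of $x[i]$ (to forward to $\Funct{PC}$) and argues separately that the transcript is simulatable.

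The fix is the very observation you state but then abandon: because $\beta$ is a uniform bit supplied by $\Funct{Prep}$ (which the simulator controls) and is independent of the adversary's view, the bit $\beta'=(x\ge r)\oplus\beta$ is uniform regardless of $x$. Hence the distribution of $d$ --- equal to $0$ when $\beta'=0$ and uniform in $\bbZ_\prm^*$ when $\beta'=1$ --- is simulatable \emph{without} knowing $x$: the simulator samples $\beta'\in\{0,1\}$ uniformly and then samples $d$ accordingly. This is precisely what the paper means by ``$\beta$ is required to make the transcript for $\beta'$ uniformly random.'' Your steps (2), (4), and the abort handling are fine; once you replace the ``compute the correct $d$ from $x$'' step with this sampling argument, the proof goes through and matches the paper's.
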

\begin{proof}
We first set up some detail on the proof strategy that is essential for other proofs as well. For the ease of exposition, we describe it in the context of \protx{PC}. The goal of designing a simulator is to be able to demonstrate the ability to produce transcripts that are indistinguishable from the transcripts in the real world. The joint distribution of the inputs and outputs is a part of these transcripts and hence has to be indistinguishable in the two interactions. However, since the honest parties simply forward their inputs to the functionality, the simulator must be able to extract the inputs of the malicious parties to be able to generate the correct shares for the honest parties.

The usual technique to achieve this is to have the simulator run a simulated version of the protocol internally, i.e., emulating the roles of the honest parties and interacting with the adversary. This is what we call an \textit{internal run}. This internal run can then be used to extract the inputs of the adversarial party (which can then be forwarded to the functionality in the ideal interaction). In the hybrid argument, the subroutines used in the protocol can be replaced by their corresponding ideal interactions, the simulator can emulate the roles of these trusted functionalities in its internal run.

In the specific context of \protx{PC}, the simulator $\Sim$ for adversary $\Adv$ works by playing the role of the trusted party for $\Funct{Mult}, \Funct{Reconst}$ and $\Funct{Prep}$. To be able to simulate, we need to show that:

\vspace{2mm}
\noindent (1) Real interaction transcripts can be simulated.

\noindent (2) Honest parties receive their outputs correctly.
\vspace{2mm}

\noindent Simulation follows easily from the protocol and the hybrid argument. The simulator for \protx{Mult} (along with the simulator for \protx{Reconst}) can be used to simulate the transcripts from Steps~\ref{code:PC:beta}, \ref{code:PC:blind} (from Algorithm~\ref{algo:private-compare}). Note that the distributions of these transcripts are all uniformly random values \ch{($\beta$ is required to make the transcript for $\beta'$ uniformly random, the various bits $u[i], w[i]$, and $c[i]$ are random because $x$ is random)} and hence achieve perfect security. Steps~\ref{code:PC:XOR}, \ref{code:PC:local}, \ref{code:PC:lasttwo1}, and \ref{code:PC:lasttwo2} on the other hand are all local and do not need simulation.

To extract the inputs of the malicious party, the simulator uses the fact that it has access to $r$ and $\beta$ (though $\Funct{Prep}$) and all the internal values for the honest parties (in the internal run) and hence can extract the shares of $x[i]$ from the corrupt party $P_2$. Finally, if the protocol aborts at any time in the internal run, then the simulator sends $\Abort$ to $\Funct{PC}$ otherwise, it inputs the extracted shares of $x[i]$ to $\Funct{PC}$ and the honest parties receive their outputs.
\end{proof}

\begin{theorem}\label{thm:WA}
\protx{WA} securely realizes $\Funct{WA}$ with abort, in the presence of one malicious party in the $(\Funct{Mult},\Funct{PC},\Funct{Reconst},\Funct{Prep})$-hybrid model.
\end{theorem}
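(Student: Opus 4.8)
Without loss of generality assume $\party{2}$ is corrupt. The plan is to reuse, essentially verbatim, the simulation template set up in the proof of Theorem~\ref{thm:PC}: build a simulator $\Sim$ for the adversary $\Adv$ that plays the trusted party for $\Funct{Mult}$, $\Funct{PC}$, $\Funct{Reconst}$ and $\Funct{Prep}$, runs an internal emulation of $\protx{WA}$ (Algorithm~\ref{algo:wrap}) against $\Adv$ using dummy inputs for the honest parties, and then establishes the two requirements: (1) the real-world transcript is perfectly simulated, and (2) the honest parties obtain correct output shares (or a consistent $\Abort$).

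For (1), the only values $\Adv$ observes beyond purely local computation are the correlated randomness $\Share{x}{\ring}$, $\Share{x[i]}{\prm}$, $\Share{\alpha}{2}$ delivered by $\Funct{Prep}$ (which $\Sim$ samples itself, honestly and consistently), the value $r$ opened in Step~2 through $\Funct{Reconst}$, and the single oracle call to $\Funct{PC}$ in Step~4 (which in the hybrid model produces nothing beyond fresh output shares that $\Sim$ chooses). The crucial point is that $x$ is a perfect one-time pad: being fresh and uniform in $\mathbb{Z}_{\ring}$ and independent of $a$, it makes $r \equiv a + x \pmod{\ring}$ uniform, so $\Sim$ samples $r$ uniformly (consistently with its internal run) and hands it to $\Adv$ via the emulated $\Funct{Reconst}$. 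The adversary's shares of $x$, of the bits $x[i]$, and of $\alpha$ are each a single component of an RSS sharing, hence uniformly random and uncorrelated with the honest state; they are consumed only inside the emulated $\Funct{PC}$ call and the final local computation of $\theta$. Thus the simulated and real transcripts are identically distributed, giving perfect security.

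For (2), correctness of the honest outputs is precisely the reduction modulo $2$ of $\theta = \beta_1+\beta_2+\beta_3+\delta-\eta-\alpha$ to $\wa{3}(a_1,a_2,a_3,\ring)$, which is exactly Eq.~\ref{eq:mainwrap} and the derivation preceding it. To drive the ideal execution $\Sim$ must extract $\Adv$'s input: since $\Sim$ emulates $\Funct{Prep}$ it knows $x$ and every $x_j$, so from the share(s) $r_j = a_j + x_j$ that $\Adv$ contributes during the reconstruction in Step~2 (and the cross-check values it supplies to $\Funct{Reconst}$) it recovers $\Adv$'s effective share(s) of $a$, combines them with the honest shares from the internal run, and submits the reconstructed $a$ to $\Funct{WA}$. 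If the internal run aborts --- an inconsistent opening caught by $\Funct{Reconst}$, or an $\Abort$ returned by the emulated $\Funct{PC}$ --- then $\Sim$ sends $\Abort$ to $\Funct{WA}$; otherwise the honest parties receive fresh RSS shares of the correct bit, and $\Sim$ adjusts those shares to agree with the output share $\Adv$ holds in the internal run (readable off using the shares freshly produced by $\Funct{PC}$ and $\Funct{Prep}$ as the rerandomizing term).

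I do not expect a genuine obstacle, since the argument is structurally identical to Theorem~\ref{thm:PC}. The one point needing care is making the one-time-pad claim for $r$ precise in the presence of the auxiliary preprocessing $\Share{x[i]}{\prm}$ and $\Share{\alpha}{2}$: revealing $r$ together with the true input does pin down $x$, but the adversary's view of $x$ never exceeds one RSS component plus the (ideal, hence simulatable by Theorem~\ref{thm:PC}) inputs and outputs of $\Funct{PC}$, so no information about the honest parties' shares leaks; the remaining steps are the routine bookkeeping of which quantities are local and which are communicated.
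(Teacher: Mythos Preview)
Your proposal is correct and follows essentially the same approach as the paper: an internal-run simulator emulating $\Funct{Prep}$, $\Funct{Reconst}$, and $\Funct{PC}$, with input extraction via the known $x_j$'s (from $\Funct{Prep}$) combined with the $r_j$'s contributed at reconstruction, and $\Abort$ forwarded to $\Funct{WA}$ on failure. The paper's own proof is considerably terser---it simply notes that Step~1 is local, Steps~2 and~4 are handled by the simulators for $\Funct{Reconst}$ and $\Funct{PC}$, and that extraction comes from knowing the preprocessing and the reconstructed values---so your explicit one-time-pad argument for the uniformity of $r$ and your discussion of why the auxiliary preprocessing $\Share{x[i]}{\prm}$, $\Share{\alpha}{2}$ does not spoil simulation are welcome additions rather than departures.
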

\begin{proof}
We use a similar set-up as the proof of Theorem~\ref{thm:PC}. Step 1 is local computation and does not need simulation. Steps 2, 4 can be simulated using the simulators for $\Funct{Reconst}, \Funct{PC}$ respectively. Input extraction follows from having access to $r_i$ (through $\Funct{Prep}$) and output $x$ if the protocol does not abort. If the protocol does abort at any time in the internal run, then the simulator sends $\Abort$ to $\Funct{WA}$. Otherwise, it simply passes on the extracted shares of $a[i]$ to $\Funct{WA}$ and the honest parties receive their outputs. Note that \protx{DReLU} is not formally defined. However, this is simply local computation over \protx{WA} and the proofs can be extended analogously.
\end{proof}

\begin{theorem}\label{thm:relu}
\protx{ReLU} securely realizes $\Funct{ReLU}$ with abort, in the presence of one malicious party in the $(\Funct{Mult},\Funct{WA},\Funct{Prep})$-hybrid model.
\end{theorem}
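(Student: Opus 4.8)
The plan is to follow the same simulation template established in the proofs of Theorems~\ref{thm:PC} and \ref{thm:WA}, instantiated for the three-line protocol in Algorithm~\ref{algo:relu}. Without loss of generality assume $P_2$ is corrupt. I would build a simulator $\Sim$ that runs an internal copy of $\protx{ReLU}$, playing the roles of the honest parties $P_0,P_1$ and of the trusted parties for $\Funct{WA}$, $\Funct{Mult}$, and $\Funct{Prep}$, and interacting with the adversary $\Adv$ exactly as the honest parties would.

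For the transcript simulation: Step~1 is a call to $\Funct{WA}$, whose transcript $\Sim$ produces via the $\Funct{WA}$ simulator (equivalently, by emulating the ideal $\Funct{WA}$ in the hybrid model). Step~2 is purely local --- the XOR-and-complement of Eq.~\ref{eq:drelu_insight} applied to the parties' MSBs and the $\wa{3}$ output --- so no messages are exchanged and nothing needs simulation. Step~3 is the call to $\protx{SS}$ on $\{a,0\}$ with selection bit $b$; internally this consumes a preprocessed random bit pair $(\Share{c}{2},\Share{c}{\ring})$ from $\Funct{Prep}$, opens the masked bit $e=b\oplus c$, and performs one $\Funct{Mult}$. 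The transcript of this step is simulated by (i) choosing $e$ as a fresh uniform bit and (ii) invoking the $\Funct{Mult}$ simulator. I would then run the standard hybrid argument $I_0,\dots,I_k$ from the real to the ideal interaction, replacing each sub-call by its ideal functionality and showing neighbouring hybrids are indistinguishable; each step is either local, an ideal-functionality call, or the opening of a value one-time-padded by fresh $\Funct{Prep}$ randomness, so all distributions match exactly (perfect security).

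For input extraction and output delivery: since $\Sim$ emulates $\Funct{Prep}$ and $\Funct{WA}$ it knows all preprocessing material and all honest-party state in the internal run; together with the fact that each honest party holds one component of the replicated sharing of $a$, $\Sim$ reads off $P_2$'s shares of $a$ and reconstructs $\Adv$'s effective input to $\Funct{ReLU}$. If the internal run aborts at any point (a $\Funct{WA}$, $\Funct{Mult}$, or opening consistency check fails), $\Sim$ sends $\Abort$ to $\Funct{ReLU}$; otherwise it forwards the extracted shares and lets the honest parties obtain their outputs. Correctness --- that these are a valid sharing of $\mathsf{ReLU}(a)$ --- follows from Eq.~\ref{eq:drelu_insight} (so $b=\mathsf{DReLU}(a)$) together with the correctness of $\protx{SS}$, which selects between $a$ and $0$ according to $b$.

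The only non-mechanical point is arguing that the opened bit $e=b\oplus c$ inside $\protx{SS}$ leaks nothing about $b=\mathsf{DReLU}(a)$, i.e.\ that $e$ is uniform and independent of the rest of the view; this is exactly where freshness of the $\Funct{Prep}$-supplied bit $c$ matters, and why $\protx{SS}$ must draw a new masking bit per invocation (note $\protx{ReLU}$ itself lists only $\Share{c}{2},\Share{c}{\ring}$ as its common randomness). Since $\protx{SS}$ appears as an inline subroutine rather than with its own theorem, I would either (a) fold its straightforward simulation directly into this proof, or (b) first prove a small lemma that $\protx{SS}$ securely realizes the obvious selection functionality in the $(\Funct{Mult},\Funct{Prep})$-hybrid model and cite it here. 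Everything else is a routine instantiation of the Theorem~\ref{thm:PC}/\ref{thm:WA} argument.
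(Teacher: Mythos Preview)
Your proposal is correct and follows the same approach as the paper: a hybrid argument in which the simulator composes the simulators for the sub-functionalities $\Funct{WA}$ and $\Funct{Mult}$ (together with $\Funct{Prep}$). The paper's own proof is essentially a two-sentence sketch to this effect; your write-up is considerably more explicit, in particular in unpacking $\protx{SS}$ and isolating the one genuinely non-mechanical point---that the opened bit $e=b\oplus c$ is uniform and independent because $c$ is a fresh $\Funct{Prep}$ bit---which the paper's proof does not spell out.
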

\begin{proof}
Simulation is done using the hybrid argument. The protocol simply composes $\Funct{WA}$ and $\Funct{Mult}$ and hence is simulated using the corresponding simulators.
\end{proof}

\begin{theorem}\label{thm:maxpool}
\protx{Maxpool} securely realizes $\Funct{Maxpool}$ with abort, in the presence of one malicious party in the $(\Funct{Mult},\Funct{ReLU},\Funct{Prep})$-hybrid model.
\end{theorem}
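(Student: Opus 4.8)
The plan is to reuse the simulation template already set up in the proofs of Theorems~\ref{thm:PC}--\ref{thm:relu}: fix $P_2$ as the corrupt party, have the simulator $\Sim$ run an internal copy of $\protx{Maxpool}$ in which it emulates the honest parties $P_0,P_1$ and plays the trusted party for $\Funct{Mult}$, $\Funct{ReLU}$ and $\Funct{Prep}$, and then argue the two standard points: (1) the adversary's view in this internal run is distributed identically to its real-world view, and (2) the honest parties either both abort or receive correct output shares of $(a_k,\bm e_k)$, matching $\Funct{Maxpool}$.

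First I would observe that $\protx{Maxpool}$ (Algorithm~\ref{algo:maxpool}) is nothing but a sequential composition, over the $n-1$ loop iterations, of three kinds of steps: purely local linear operations ($\mathsf{d_{max}}\gets\mathsf{max}-a_i$, $\bm{\mathsf{d_{ind}}}\gets\bm{\mathsf{ind}}-\bm e_i$, and the initializations), one $\protx{DReLU}$ call, and two $\protx{SS}$ calls driven by the same selection bit $b$ (one on the scalar running max, one on the one-hot running index vector). The $\protx{DReLU}$ invocations are simulated exactly as the corresponding sub-steps in the proof of Theorem~\ref{thm:relu}, i.e.\ via the $\Funct{WA}$ simulator of Theorem~\ref{thm:WA} plus local post-processing; each $\protx{SS}$ call reduces to one $\Funct{Mult}$ (for $\Share{(y-x)\cdot d}{\ring}$) together with the reveal of a masked bit $e=b\oplus c$ where $c$ is a fresh bit from $\Funct{Prep}$. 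Hence, by the hybrid argument, the whole transcript the adversary sees consists of sub-protocol transcripts that $\Sim$ produces with the corresponding sub-simulators, plus the opened bits $e$, each of which is uniform because $c$ is freshly random; since the running max and running index remain secret-shared and are never opened, nothing leaks across iterations, giving perfect indistinguishability of views.

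Next I would handle input extraction and output delivery. Because $\Sim$ emulates $P_0$ and $P_1$, and for a single corruption the two honest parties between them hold all additive components of every $\Share{a_i}{\ring}$, $\Sim$ can reconstruct the underlying inputs $a_1,\dots,a_n$ and feed them to $\Funct{Maxpool}$, which returns correctly-correlated shares of $a_k$ and $\bm e_k$ (with the loop's tie-breaking -- ties resolved in favour of the earlier index -- built into the functionality). If any invoked sub-functionality aborts in the internal run, $\Sim$ instead sends $\Abort$ to $\Funct{Maxpool}$. Malicious security of the sub-functionalities makes this exhaustive: each deviation of $P_2$ is either absorbed by a correct-or-abort sub-functionality or caught there, so the composed output is itself correct-or-abort, as in the ideal execution.

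I do not expect a deep obstacle -- this is essentially bookkeeping on top of Theorems~\ref{thm:PC}--\ref{thm:relu}. The points that need care are: (a) $\protx{DReLU}$ and $\protx{SS}$ are not literally the named functionalities of the stated hybrid model, so one must first spell out their reduction to $\Funct{WA}$/$\Funct{ReLU}$, $\Funct{Mult}$ and $\Funct{Prep}$ (plus the single-bit opening, simulated by a uniform bit); (b) the induction over the $n-1$ iterations, where I would verify that the only fresh adversary-visible data per iteration are the uniform masked bits, so the hybrid chain telescopes; and (c) checking that the two $\protx{SS}$ calls sharing the same bit $b$ create no leak, which they do not, since $b$ is never revealed and both outputs stay shared.
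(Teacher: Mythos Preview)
Your proposal is correct and follows essentially the same approach as the paper: the paper's own proof is the single sentence ``Similar to the proof of Theorem~\ref{thm:relu}, simulation works by sequentially composing the simulators for $\Funct{ReLU}$ and $\Funct{Mult}$,'' and your plan is precisely this sequential-composition argument, just fleshed out with the bookkeeping (input extraction, the uniform opened bits in $\protx{SS}$, abort propagation) that the paper leaves implicit. The only mild mismatch you already flag in point~(a)---that the stated hybrid contains $\Funct{ReLU}$ while the protocol literally calls $\protx{DReLU}$---is handled identically in both accounts by treating $\protx{DReLU}$ as the $\Funct{WA}$-based sub-step of $\protx{ReLU}$.
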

\begin{proof}
Similar to the proof of Theorem~\ref{thm:relu}, simulation works by sequentially composing the simulators for $\Funct{ReLU}$ and $\Funct{Mult}$.
\end{proof}

\begin{theorem}\label{thm:pow}
\protx{Pow} securely realizes $\Funct{Pow}$ with abort, in the presence of one malicious party in the $(\Funct{Mult},\Funct{ReLU},\Funct{Reconst},\Funct{Prep})$-hybrid model.
\end{theorem}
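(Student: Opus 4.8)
The plan is to reuse the hybrid-simulation template of Theorems~\ref{thm:PC}--\ref{thm:maxpool}, specialised to the binary search of Algorithm~\ref{algo:power}. Working in the $(\Funct{Mult},\Funct{ReLU},\Funct{Reconst},\Funct{Prep})$-hybrid model and assuming WLOG that $P_2$ is corrupt, the simulator $\Sim$ runs an internal copy of the protocol in which it emulates the honest parties $P_0,P_1$ against $\Adv$ and plays the trusted party for every sub-functionality. In each loop iteration the only computation besides the sub-calls is forming $x-2^{2^i+\alpha}$; since $\alpha$ is public at the start of that iteration, this is a public-constant linear operation, hence purely local and in need of no simulation ($\Funct{Mult}$ appears only indirectly, inside $\protx{ReLU}$). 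The only interactive steps to be simulated are therefore the $\protx{DReLU}$ invocation (local post-processing of $\Funct{WA}$, and a sub-component of $\protx{ReLU}$, so handled by the $\Funct{ReLU}$ simulator) and the reconstruction of the bit $c$ (handled by the $\Funct{Reconst}$ simulator).

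First I would argue transcript indistinguishability. The $\protx{DReLU}$ part is simulated verbatim by the $\Funct{ReLU}$ simulator and, by the perfect security of that sub-protocol, is distributed identically to the real run. The subtle part is the opened bit $c$: it is revealed in the clear and is data-dependent, so $\Sim$ must output exactly the right value rather than a fresh random one. The key observation is that the sequence of opened bits over the loop is a \emph{deterministic} function of the final output: iteration $i$ contributes $2^i$ to $\alpha$ precisely when $c=1$, so the opened bits are the binary digits of $\alpha$. Thus, once $\Sim$ obtains $\alpha$ from $\Funct{Pow}$ in the ideal interaction (the adversary also receives $\alpha$ as output, so $\Sim$ learns it), it knows every value that will be opened and feeds it to the emulated $\Funct{Reconst}$. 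The resulting transcript is identical, not merely close, giving perfect security.

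Input extraction and abort handling follow the earlier proofs verbatim: from the values supplied by $\Funct{Prep}$ together with the honest parties' internal views, $\Sim$ recovers $\Adv$'s shares of $x$ (each component held by $\Adv$ is also held by an honest party). If any emulated $\Funct{ReLU}$ or $\Funct{Reconst}$ call aborts in the internal run, $\Sim$ sends $\Abort$ to $\Funct{Pow}$; otherwise it forwards the extracted shares and the honest parties receive $\alpha$. For correctness, note that $\Funct{ReLU}$ and $\Funct{Reconst}$ are realised \emph{with abort}, so each opened $c$ in a non-aborting run is the true $\mathsf{DReLU}$ bit; hence the accumulated $\alpha$ satisfies $2^{\alpha}\le x<2^{\alpha+1}$ and matches the value $\Funct{Pow}$ hands to the honest parties.

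The main obstacle I expect is exactly the middle paragraph --- justifying that opening the intermediate bits $c$ leaks nothing beyond the declared output. This comes down to the simple but essential combinatorial fact, already flagged in the remark after Algorithm~\ref{algo:power}, that the opened bits are a public function of $\alpha$ alone, i.e., the protocol reveals the bounding power of $2$ and nothing more. Once that is in place, the rest is routine composition of the hybrid functionalities, as in Theorems~\ref{thm:relu} and~\ref{thm:maxpool}.
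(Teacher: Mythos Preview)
Your simulation of the non-aborting transcript is right and matches the paper: the opened bits $c$ in Algorithm~\ref{algo:power} are exactly the binary digits of the final $\alpha$, so once the simulator knows $\alpha$ it can feed the correct values through the emulated $\Funct{Reconst}$. Where your write-up diverges from the paper --- and where there is a real gap --- is the abort handling. You treat $\Funct{Pow}$ as a plain ``output $\alpha$ or $\Abort$'' box and have $\Sim$ first query it for $\alpha$, then later ``send $\Abort$''. But look at Fig.~\ref{FPOW}: the functionality takes an explicit index $k$ and, when $k\neq 0$, delivers the partial bits $\alpha[\ell-1],\ldots,\alpha[k]$ to \emph{all} parties before aborting. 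This is not cosmetic. In a real execution that aborts in iteration $k$, the honest parties have already seen the reconstructed bits from iterations $\ell-1,\ldots,k+1$; an ideal world in which they receive only $\Abort$ is distinguishable by the environment. Your simulator never produces $k$, and the ``get $\alpha$ from $\Funct{Pow}$ first, decide on abort later'' ordering cannot produce it at the right time, since $k$ is only determined after the simulation is run.

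The paper's proof fixes exactly this: the simulator plays trusted party for $\Funct{ReLU}$ in the \emph{first} iteration and thereby extracts the adversary's shares of the input, so it can reconstruct $x$ and compute $\alpha$ \emph{locally} without touching $\Funct{Pow}$ at all. It then runs all remaining iterations, recording the iteration $k$ at which (if ever) an abort occurs. Only at the end does it invoke $\Funct{Pow}$, sending the extracted shares together with $k$ (or $k=0$ if no abort), so that the honest parties in the ideal world receive precisely the same prefix of bits they would have seen in the real world. Your argument becomes correct once you reorder it this way and supply $k$ to the functionality rather than a bare $\Abort$.
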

\begin{proof}
The simulator for $\Adv$ works by playing the role of the trusted party for $\Funct{Mult}, \Funct{ReLU},$ and $\Funct{Reconst}$. The protocol sequentially reveals bits of the scale $\alpha$. It is important to note the functionality that it emulates (see  in Fig.~\ref{FPOW}). The simulator runs the first iteration of the loop and in the process extracts the adversaries inputs. Then it proceeds to complete all the iterations of the loop. If the protocol proceeds without aborting till the end, then the simulator sends the extracted shares of $b$ along with $k=0$ to the functionality $\Funct{Pow}$. If the protocol aborts at iteration $k$, then the simulator sends the extracted shares of $b$ along with $k$ to $\Funct{Pow}$. %Transcripts can be simulated using the hybrid simulators.
\end{proof}

\begin{theorem}\label{thm:divbn}
\protx{Div}, \protx{BN} securely realize $\Funct{Div}$, $\Funct{BN}$ respectively, with abort, in the presence of one malicious party in the $(\Funct{Mult},\Funct{Pow},\Funct{Prep})$-hybrid model.
\end{theorem}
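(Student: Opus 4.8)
The plan is to reuse the hybrid-argument template already set up in the proofs of Theorems~\ref{thm:relu}--\ref{thm:pow}: since $\protx{Div}$ and $\protx{BN}$ are ``thin'' protocols that only stitch together local linear operations with calls to $\Funct{Mult}$ and $\Funct{Pow}$ (plus preprocessing $\Funct{Prep}$), their security reduces to the security of those sub-functionalities. I would first fix, without loss of generality, that $P_2$ is corrupt and describe a simulator $\Sim$ that emulates the honest parties $P_0,P_1$ in an internal run while playing the trusted party for $\Funct{Mult}$, $\Funct{Pow}$, and $\Funct{Prep}$ towards the adversary $\Adv$.

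For the \emph{simulation of transcripts}, I would walk through both algorithms line by line and classify each step as either (i) purely local/linear in the shares --- computing $\mu=\tfrac1m\sum a_i$, the constants $w_0=2.9142-2b$, $\epsilon_0$, $\epsilon_1$, the initial guesses $2^{-\round{\alpha/2}}$, and the closing affine maps --- which produce no messages and need no simulation; or (ii) an invocation of $\Funct{Mult}$ (the squarings $(a_i-\mu)^2$ and $x_i^2$, the products $b\cdot w_0$, $a\,w_0(1+\epsilon_0)(1+\epsilon_1)$, the Newton updates $\tfrac{x_i}{2}(3-bx_i^2)$, and the final $\gamma\cdot x_{\mathsf{rnds}}\cdot(a_i-\mu)$) or of $\Funct{Pow}$. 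Transcripts of type (ii) are produced by composing $\Simu{Mult}$ with the $\Funct{Pow}$-simulator $\Simu{Pow}$ from Theorem~\ref{thm:pow}; as in the earlier proofs, every $\Funct{Mult}$ message $\Adv$ sees is a share masked by a fresh $3$-out-of-$3$ sharing of $0$ and is therefore uniformly distributed, so the composed simulation is perfect, not merely statistically close, and steps of type (i) contribute nothing to simulate.

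Next, for \emph{input extraction and output delivery}: the secrets entering the protocol are $a,b$ for $\protx{Div}$ and $a_1,\dots,a_m,\gamma,\beta$ for $\protx{BN}$. Since $\Sim$ holds the honest parties' internal shares and, through its emulation of $\Funct{Prep}$, $\Funct{Mult}$ and $\Funct{Pow}$, observes what $\Adv$ feeds in, it recovers $\Adv$'s missing component of each replicated sharing: $b$ from the $\Funct{Pow}$ call (and, in $\protx{BN}$, from the squaring $\Funct{Mult}$ calls), $a$ and the $a_i$ from the first $\Funct{Mult}$ call taking them as an operand, and $\gamma,\beta$ from the closing affine $\Funct{Mult}$. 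If any emulated call to $\Funct{Mult}$ or $\Funct{Pow}$ aborts in the internal run, $\Sim$ forwards $\Abort$ to $\Funct{Div}$ (resp.\ $\Funct{BN}$); otherwise it submits the extracted shares and the honest parties collect their outputs. Correctness of delivery is immediate because $\Funct{Div}$ and $\Funct{BN}$ (Fig.~\ref{FDIV}, Fig.~\ref{FBN}) are, by construction, defined to output exactly the iterative approximations of Eqs.~\ref{eq:appdiv}--\ref{eq:invsqrt} (and, for the training variant, Eq.~\ref{eq:sqrt}) that the protocols compute, and to leak the bounding power $\alpha$.

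The main obstacle I expect is not the composition bookkeeping but pinning down the ideal functionalities so that the equivalence is honest: $\Funct{Div}$ and $\Funct{BN}$ must (a) compute the \emph{approximate} quotient / normalized batch rather than the exact ones, matching the Newton/Goldschmidt-style reciprocal and inverse-square-root iterates actually used, and (b) be specified to reveal $\alpha$ with $2^{\alpha}\le b<2^{\alpha+1}$ --- since that is the one value opened in the clear along the way (inside $\Funct{Pow}$), the simulator must receive it as part of the functionality's leaky output in order to answer $\Adv$'s $\Funct{Pow}$ query consistently. Once the functionalities are stated with exactly this leakage, nothing further is needed, and the ``training'' batch-norm variant (which first computes $\sqrt{\sigma^2+\epsilon}$ via Eq.~\ref{eq:sqrt} and then inverts it) is handled identically, being again only local algebra wrapped around $\Funct{Mult}$ and $\Funct{Pow}$.
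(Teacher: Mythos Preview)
Your proposal is correct and follows essentially the same approach as the paper: both observe that $\protx{Div}$ and $\protx{BN}$ are just local computations interleaved with calls to $\Funct{Mult}$ and $\Funct{Pow}$, so simulation is by composition and input extraction piggybacks on the $\protx{Pow}$ simulator. Your write-up is considerably more detailed than the paper's two-line proof---in particular your discussion of why $\Funct{Div}$ and $\Funct{BN}$ must be defined to output the \emph{approximate} iterates and to leak $\alpha$ is exactly right and is handled in the paper only implicitly, via the functionality boxes in Figs.~\ref{FDIV} and~\ref{FBN}.
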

\begin{proof}
\protx{Div}, \protx{BN} are sequential combinations of local computations and invocations of $\Funct{Mult}$. Simulation follows directly from composing the simulators and input extraction follows from the simulator of \protx{Pow}.
\end{proof}

\vspace{-0.8cm}
\subsubsection{Protocol Overheads}\vspace{-3mm}
We theoretically estimate the overheads of our protocols in Table~\ref{tab:overheads} in Appendix~\ref{app:overhead}. 
The dominant round complexity for private compare comes from the string multiplication in Step~\ref{code:PC:blind}. $\wa{3}$ requires one additional round and one additional ring element (two in malicious security) over private compare.
Computing derivative of ReLU is a local computation over the $\wa{3}$ function. Computing ReLU requires two additional rounds and one ring element (two for malicious). Maxpool and derivative of require rounds proportional to the area of the filter. Finally, pow, division, and batch-norm requires a quadratic number of rounds in $\ell$.

\vspace{-0.8cm}

\section{Experimental Evaluation}\label{sec:eval}\vspace{-2mm}

We evaluate the performance of training and inference with \falcon on 6  networks of varying parameter sizes trained using MNIST, CIFAR-10 and Tiny ImageNet datasets (cf. Appendix~\ref{app:datasets}). A number of prior works such as SecureML~\cite{secureml}, DeepSecure~\cite{deepsecure}, MiniONN~\cite{minionn}, Gazelle~\cite{gazelle}, SecureNN~\cite{securenn}, ABY$^3$~\cite{aby3}, and Chameleon~\cite{chameleon} evaluate over these networks and we mimic their evaluation set-up for comparison. 

\vspace{-6mm}
\subsection{Experimental Setup}\label{subsec:setup}\vspace{-3mm}
We implement \falcon framework in about 14.6k LOC in {\tt C++} using the communication backend of SecureNN and will be open-sourced at \url{https://github.com/snwagh/falcon-public}. We run our experiments on Amazon EC2 machines over Ubuntu 18.04 LTS with Intel-Core i7 processor and 64GB of RAM. Our evaluation set-up uses similar as compared to prior work~\cite{securenn, aby3, secureml, chameleon, gazelle}.
We perform extensive evaluation of our framework in both the LAN and WAN setting. For the LAN setting, our bandwidth is about 625 MBps and ping time is about 0.2ms. For WAN experiments, we run servers in different geographic regions with 70ms ping time and 40 MBps bandwidth.

\begin{table*}[t]
\centering
%\resizebox{0.7\textwidth}{!}{
\begin{tabular}{c l c c c c c c c c c c c c c c}
%\toprule
& \multirow{2}{*}{Framework} & \multirow{2}{*}{Threat Model} &\multirow{2}{*}{LAN/ WAN}  & \multicolumn{2}{c}{Network-A} & \multicolumn{2}{c}{Network-B} & \multicolumn{2}{c}{Network-C} \\  \cmidrule{5-10}
& & & & Time  & Comm. & Time  & Comm. & Time  & Comm. \\ \midrule
\multirow{6}{*}{2PC} & SecureML~\cite{secureml}		& Semi-honest & LAN & $4.88$ & - & - & - & - & - \\
& DeepSecure~\cite{deepsecure}	& Semi-honest & LAN & - & - & $9.67$ & $791$ & - & -  \\
& EzPC~\cite{ezpc}	& Semi-honest & LAN & $0.7$ & $76$ & $0.6$ & $70$ & $5.1$ & $501$ \\
& Gazelle~\cite{gazelle}				& Semi-honest & LAN & $0.09$ & $0.5$ & $0.29$ & $0.8$ & $1.16$ & $70$ \\
& MiniONN~\cite{minionn} 			& Semi-honest & LAN &  $1.04$ & $15.8$ & $1.28$ & $47.6$ & $9.32$ & $657.5$ \\
& XONN~\cite{xonn} 					& Semi-honest & LAN & $0.13$ & $4.29$ & $0.16$ & $38.3$ & $0.15$ & $32.1$ \\ \midrule
%& Delphi~\cite{delphi} 					& Semi-honest & hmmmmmmmm & $0.13$ & $4.29$ & $0.16$ & $38.3$ & $0.15$ & $32.1$ \\ \midrule
 & Chameleon~\cite{chameleon}	& Semi-honest & LAN & - & - & $2.7$ & $12.9$ & -  & -  \\
& ABY$^3$~\cite{aby3} 				& Semi-honest & LAN & $0.008$ & $0.5$ & $0.01$ & $5.2$ & - & -  \\
& SecureNN~\cite{securenn}		& Semi-honest & LAN & $0.043$ & $2.1$ & $0.076$ & $4.05$ & $0.13$ & $8.86$  \\
\rowcolor{gray!20} \cellcolor{white} & & Semi-honest & LAN & $0.011$ & $0.012$ & $0.009$ & $0.049$ & $0.042$ & $0.51$ \\
\rowcolor{gray!20} \cellcolor{white}  & \multirow{-2}{*}{\ThisWork{}} & Malicious & LAN & $0.021$ & $0.31$ & $0.022$ & $0.52$ & $0.089$ & $3.37$ \\
& SecureNN~\cite{securenn}		& Semi-honest & WAN & $2.43$ & $2.1$ & $3.06$ & $4.05$ & $3.93$ & $8.86$  \\
\rowcolor{gray!20} \cellcolor{white} & & Semi-honest & WAN & $0.99$ & $0.012$ & $0.76$ & $0.049$ & $3.0$ & $0.5$\\
\rowcolor{gray!20} \cellcolor{white} \multirow{-8}{*}{3PC} & \multirow{-2}{*}{\ThisWork{}} & Malicious	& WAN & $2.33$ & $0.31$ & $1.7$ & $0.52$ & $7.8$ & $3.37$ \\ \midrule %\bottomrule
\multirow{2}{*}{4PC} & FLASH~\cite{flash}					& Malicious & LAN & $0.029$ & - & - & - & -  & -  \\
 & FLASH~\cite{flash}					& Malicious & WAN & $12.6$ & - & - & - & -  & -  \\
\end{tabular}%}
%\vspace{1mm}
\caption{\footnotesize Comparison of inference time of various frameworks for different networks using MNIST dataset. All runtimes are reported in seconds and communication in MB. ABY$^3$ and XONN do no implement their maliciously secure versions. 2-party (2PC) protocols are presented here solely for the sake of comprehensive evaluation of the literature.}
\label{tab:inference_sec}
\vspace{-0.5cm}
\end{table*}

\begin{table*}[t]
\centering
\resizebox{\textwidth}{!}{
\begin{tabular}{l c c c c c c c c c c c c c c}
%\toprule
\multirow{2}{*}{Framework} & \multirow{2}{*}{Threat Model} &\multirow{2}{*}{LAN/WAN}  & \multicolumn{2}{c}{LeNet (MNIST)} & \multicolumn{2}{c}{AlexNet (CIFAR-10)} & \multicolumn{2}{c}{VGG16 (CIFAR-10)} & \multicolumn{2}{c}{AlexNet (ImageNet)} & \multicolumn{2}{c}{VGG16 (ImageNet)}  \\  \cmidrule{4-13}
& & & Time  & Comm. & Time  & Comm. & Time  & Comm. & Time  & Comm. & Time  & Comm. \\ \midrule
\rowcolor{gray!20} & Semi-honest 		& LAN & $0.047$ & $0.74$ & $0.043$ & $1.35$ & $0.79$ & $13.51$ & $1.81$ & $19.21$ & $3.15$ & $52.56$ \\
 \rowcolor{gray!20} & Malicious 			& LAN & $0.12$ & $5.69$ & $0.14$ & $8.85$ & $2.89$ & $90.1$ & $6.7$ & $130.0$ & $12.04$* & $395.7$* \\
 \rowcolor{gray!20} & Semi-honest 		& WAN & $3.06$ & $0.74$ & $0.13$ & $1.35$ & $1.27$ & $13.51$ & $2.43$ & $19.21$ & $4.67$ & $52.56$\\
 \rowcolor{gray!20} \multirow{-4}{*}{\ThisWork{}} & Malicious 			& WAN & $7.87$ & $5.69$ & $0.41$ & $8.85$ & $4.7$ & $90.1$ & $8.68$ & $130.0$ & $37.6$* & $395.7$* \\ %\bottomrule
\end{tabular}}
\caption{\footnotesize Comparison of inference time of various frameworks over popular benchmarking network architectures from the machine learning domain. All runtimes are reported in seconds and communication in MB. * indicate non-amortized numbers.}
\label{tab:inference_ml}
\vspace{-0.5cm}
\end{table*}

\textbf{Optimizations:} All data-independent computation, i.e., pre-computation, is parallelized using 16 cores to reduce the run-time. When a ReLU layer is followed by a Maxpool layer, we swap the order of these two layers for optimized runtimes. We use the Eigen library for faster matrix multiplication and parallelize the private compare computation. We optimize across the forward and backward pass for Maxpool, ReLU, and Batch-Normalization layers, i.e., we compute the relevant derivatives while computing the functions. We use 32-bit integer range with 16 bits of fixed-point precision. As the entire codebase is parallelizable, significant improvement is possible by implementing \falcon using TensorFlow or PyTorch which support easy parallelization as well as computations over GPUs.

\textbf{Networks and Datasets:} For comparison with different networks as well as plaintext computations, we select 3 standard benchmarking datasets --- MNIST~\cite{mnist}, CIFAR-10~\cite{cifar}, and Tiny ImageNet~\cite{tinyimagenet} and 6 standard network architectures -- 3 from the privacy-preserving ML community and 3 from the ML community. For more details refer to Appendix~\ref{app:datasets},~\ref{app:networks}.

\vspace{-0.6cm}
\subsection{Results for Private Inference}\vspace{-2mm}
Tables~\ref{tab:inference_sec},~\ref{tab:inference_ml}  report the end-to-end latency time (in seconds) and number of bytes (in MB) communicated for performing a single inference query with
\ThisWork. We execute the queries in both LAN and WAN as well as semi-honest and malicious settings and compare with prior work wherever applicable.

\textbf{Comparison to Prior Work.}
We compare the inference time of a single query and the communication bytes of \falcon{} with prior work on networks A, B and C.
None of the prior works evaluate the remaining networks and hence we do not compare the performance of \falcon{} for the networks in Table~\ref{tab:inference_ml}.
Depending on the network architecture, our results are between $3\times$-$120\times$ faster than existing work. In particular, we are up to $18\times$ faster than XONN~\cite{xonn} ($11\times$ on average) and $32\times$ faster than Gazelle ($23\times$ on average), $8\times$ faster than SecureNN ($3\times$ on average), and comparable to ABY$^3$ on small networks.
%\sameer{Why are we selectively with only XONN and Gazelle? What about others ABY3?}
We are also $40\times$ more communication efficient than ABY$^3$~\cite{aby3}, $200\times$ more communication efficient than SecureNN~\cite{securenn}, and over $760\times$ more communication efficient compared to XONN~\cite{xonn}. \ch{Note that it is hard to compare frameworks without actually running benchmarks as different protocols scale differently for different architectures. For instance, GC based protocols scale better when run over WAN settings and larger networks change the fraction of the total overhead from linear layers for reasons described in Section~\ref{subsec:compute} and thus affect different protocols differently.} %\sameer{Communication improvement over ABY3?}

\textbf{Inference time and communication with \ThisWork.}
For both the adversarial settings, the inference latency for \falcon{}over LAN is within $25$ms for smaller networks (A and B) and around $100$ms for Network-C and LeNet. For AlexNet and VGG16, the inference time ranges from $0.5$ to $12$s depending on the model and the input dataset. The inference time increases with the size of the input image. Hence, queries over Tiny ImageNet are slower than CIFAR-10 for the same model architecture.
The inference time over the WAN setting ranges from $1$ to $3$s for the networks A, B and C and from $3$ to $37$s for the larger networks. However, we emphasize that the inference time with semi-honest adversarial setting is around $3\times$ faster than that for the malicious adversary. Hence, a faster deployment protocol is possible depending on the trust assumptions of the application.

In addition to efficient response times, our results show that \ThisWork{} requires small amounts of communication. Parties exchange less than $4$MB of data for smaller networks (Table~\ref{tab:inference_sec}) and $5$-$400$MB for larger networks (Table~\ref{tab:inference_ml}) (same for both LAN, WAN). However, similar to the inference time, malicious setting requires a higher communication and thus higher run-time.

\begin{table*}[h]
\centering
%\resizebox{0.7\textwidth}{!}{
\begin{tabular}{l c c c c c c c c c c c c c}
%\toprule
\multirow{2}{*}{Framework} & \multirow{2}{*}{Threat Model} &\multirow{2}{*}{LAN/ WAN}  &  \multicolumn{2}{c}{Network-A} & \multicolumn{2}{c}{Network-B} & \multicolumn{2}{c}{Network-C}  \\  \cmidrule{4-9}
& & & Time  & Comm. & Time  & Comm. & Time  & Comm. \\ \midrule
SecureML~\cite{secureml}* & Semi-honest 	& LAN & $81.7$ & - & - & - & - & - \\
SecureML~\cite{secureml}	& Semi-honest	& LAN & $7.02$ & - & - & - & - & - \\
ABY$^3$~\cite{aby3} & Semi-honest				& LAN & $0.75$ & $0.031$ & - & - & - & - \\ 
%XONN~\cite{xonn}				& & & & & & \\ 
SecureNN~\cite{securenn} 	 & Semi-honest & LAN		& $1.03$ & $0.11$ & - & - & $17.4$ & $30.6$ \\
\rowcolor{gray!20} & Semi-honest 			& LAN & $0.17$ & $0.016$ & $0.42$ & $0.056$ & $3.71$ & $0.54$ \\
\rowcolor{gray!20} \multirow{-2}{*}{\ThisWork{}} & Malicious		& LAN & $0.56$ & $0.088$ & $1.17$ & $0.32$ & $11.9$ & $3.29$ \\
SecureML~\cite{secureml}* & Semi-honest	& WAN	& $4336$ & - & - & - & - & - \\
SecureNN~\cite{securenn}	& Semi-honest  & WAN 	& $7.83$ & $0.11$ & - & - & $53.98$ & $30.6$ \\
\rowcolor{gray!20} & Semi-honest 			& WAN & $3.76$ & $0.016$ & $3.4$ & $56.14$ & $14.8$ & $0.54$ \\
\rowcolor{gray!20} \multirow{-2}{*}{\ThisWork{}} & Malicious 			& WAN & $8.01$ & $0.088$ & $7.5$ & $0.32$ & $39.32$ & $3.29$ \\ \midrule

Batch Size, Epochs &  &
& \multicolumn{2}{c}{128, 15} 
& \multicolumn{2}{c}{128, 15} 
& \multicolumn{2}{c}{128, 15} \\ %\bottomrule

%Batch Size, Epochs & \multicolumn{2}{c}{$128, 15$} &  \multicolumn{2}{c}{$128, 15$} & \multicolumn{2}{c}{$128, 15$} & \multicolumn{2}{c}{$128, 15$} & \multicolumn{2}{c}{$128, 15$} & \multicolumn{2}{c}{$128, 15$} \\
%Accuracy & \multicolumn{2}{c}{$99\%$} & \multicolumn{2}{c}{$99\%$} & \multicolumn{2}{c}{$99\%$} & \multicolumn{2}{c}{$99\%$} & \multicolumn{2}{c}{$99\%$} & \multicolumn{2}{c}{$99\%$} \\ \bottomrule

%\multirow{2}{*}{\begin{tabular}[c]{@{}c@{}}Batch Size, Epochs \\ Accuracy \end{tabular}} 
%& \multicolumn{2}{c}{\multirow{2}{*}{128, 15, 99\%}} 
%& \multicolumn{2}{c}{\multirow{2}{*}{128, 15, 99\%}} 
%& \multicolumn{2}{c}{\multirow{2}{*}{128, 15, 99\%}} 
%& \multicolumn{2}{c}{\multirow{2}{*}{128, 15, 99\%}} 
%& \multicolumn{2}{c}{\multirow{2}{*}{128, 15, 99\%}} 
%& \multicolumn{2}{c}{\multirow{2}{*}{128, 15, 99\%}} \\ 
%& & & & & & &\\ \bottomrule
\end{tabular}%}
%\vspace{1mm}
\caption{\footnotesize Comparison of training time of various frameworks over popular benchmarking network architectures from the security domain. All runtimes are reported in hours and communication in TB. * correspond to 2PC numbers. ABY$^3$ does not implement their maliciously secure protocols.}
\label{tab:training_sec}
\vspace{-0.5cm}
\end{table*}

\begin{table*}[h]
\centering
\resizebox{\textwidth}{!}{
\begin{tabular}{l c c c c c c c c c c c c c c c c}
%\toprule
\multirow{2}{*}{Framework} &\multirow{2}{*}{Threat Model} &\multirow{2}{*}{LAN/ WAN}  & \multicolumn{2}{c}{LeNet} & \multicolumn{2}{c}{AlexNet (CIFAR-10)} & \multicolumn{2}{c}{VGG16 (CIFAR-10)} &
\multicolumn{2}{c}{AlexNet (ImageNet)} & \multicolumn{2}{c}{VGG16 (ImageNet)} \\ 
\cmidrule{4-13}
& & & Time  & Comm. & Time  & Comm. & Time  & Comm. & Time & Comm. & Time & Comm. \\ \midrule
\rowcolor{gray!20} & Semi-honest 	 			& LAN & $6.05\times 10^0$ & $0.81$ & $7.89\times 10^1$ & $7.24$ & $8.43\times 10^2$ & $45.9$ & $1.23\times 10^4$ & $222.9$ & $5.19\times 10^3$ & $156.0$ \\
\rowcolor{gray!20} & Malicious 		 			& LAN & $1.22\times 10^1$ & $4.82$ & $2.82\times 10^2$ & $43.4$ & $3.05\times 10^3$ & $185.3$ & $4.63\times 10^4$ & $1598$ & $1.95\times 10^4$ & $1012$ \\
\rowcolor{gray!20} & Semi-honest 	 			& WAN & $1.85\times 10^1$ & $0.81$ & $2.33\times 10^2$ & $7.24$ & $2.09\times 10^3$ & $45.9$ & $1.54\times 10^4$ & $222.9$ & $6.89\times 10^3$ & $156.0$ \\
\rowcolor{gray!20} \multirow{-4}{*}{\ThisWork{}} &Malicious 	& WAN & $5.20\times 10^1$ & $4.82$ & $7.24\times 10^2$ & $43.4$ & $5.26\times 10^3$ & $185.3$ & $5.71\times 10^4$ & $1598$ & $2.47\times 10^4$ & $1012$ \\ \midrule
\multicolumn{3}{c}{Batch Size, Epochs}
& \multicolumn{2}{c}{128, 15} 
& \multicolumn{2}{c}{128, 90} 
& \multicolumn{2}{c}{128, 25}
& \multicolumn{2}{c}{128, 90} 
& \multicolumn{2}{c}{128, 25} \\ %\bottomrule
\end{tabular}}
\caption{\footnotesize Comparison of training time of various frameworks over popular benchmarking network architectures from the machine learning domain. All runtimes are \ch{reported in hours and communication in TB.}}
\label{tab:training_ml}
\vspace{-0.8cm}
\end{table*}

%\begin{table*}[h]
%\centering
%\resizebox{\textwidth}{!}{
%\begin{tabular}{l c c c c c c c c c c c c c c c c}
%%\toprule
%\multirow{2}{*}{Framework} &\multirow{2}{*}{Threat Model} &\multirow{2}{*}{LAN/ WAN}  & \multicolumn{2}{c}{LeNet} & \multicolumn{2}{c}{AlexNet (CIFAR-10)} & \multicolumn{2}{c}{VGG16 (CIFAR-10)} &
%\multicolumn{2}{c}{AlexNet (ImageNet)} & \multicolumn{2}{c}{VGG16 (ImageNet)} \\ 
%\cmidrule{4-13}
%& & & Time  & Comm. & Time  & Comm. & Time  & Comm. & Time & Comm. & Time & Comm. \\ \midrule
%\rowcolor{gray!20} & Semi-honest 	 			& LAN & $0.036$ & $0.81$ & $0.47$ & $7.24$ & $5.02$ & $45.9$ & $73.0$ & $222.9$ & $30.9$ & $156.0$ \\
%\rowcolor{gray!20} & Malicious 		 			& LAN & $0.73$ & $4.82$ & $1.68$ & $43.4$ & $18.18$ & $185.3$ & $276$ & $1598$ & $116$ & $1012$ \\
%\rowcolor{gray!20} & Semi-honest 	 			& WAN & $0.11$ & $0.81$ & $1.39$ & $7.24$ & $12.44$ & $45.9$ & $91.7$ & $222.9$ & $41.0$ & $156.0$ \\
% \rowcolor{gray!20} \multirow{-4}{*}{\ThisWork{}} &Malicious 	& WAN & $0.31$ & $4.82$ & $4.31$ & $43.4$ & $31.32$ & $185.3$ & $340$ & $1598$ & $147$ & $1012$ \\ \midrule
%\multicolumn{3}{c}{Batch Size, Epochs}
%& \multicolumn{2}{c}{128, 15} 
%& \multicolumn{2}{c}{128, 90} 
%& \multicolumn{2}{c}{128, 25}
%& \multicolumn{2}{c}{128, 90} 
%& \multicolumn{2}{c}{128, 25} \\ %\bottomrule
%\end{tabular}}
%\caption{\footnotesize Comparison of training time of various frameworks over popular benchmarking network architectures from the machine learning domain. All runtimes are reported in weeks and communication in TB.}
%\label{tab:training_ml}
%\vspace{-0.8cm}
%\end{table*}

\vspace{-0.5cm}
\subsection{Results for Private Training}\vspace{-2mm}
Tables~\ref{tab:training_sec},~\ref{tab:training_ml} report the execution time and the communication required for training  the $6$ networks.

\textbf{Comparison to Prior Work.}
For private training, \falcon is up to $6\times$ faster than SecureNN~\cite{securenn} ($4\times$ on average), $4.4\times$ faster than ABY$^3$ and $70\times$ faster than SecureML~\cite{secureml}. We highlight that \falcon{} achieves these speedups due to improved protocols (both round complexity and communication as described in Section~\ref{subsec:techcontrib}).  Table~\ref{tab:training_sec} shows that the communication overhead is $10\times$ to $100\times$ as compared to other solutions.

\textbf{Execution time for \ThisWork.} The time to privately train networks A, B and C with \falcon is around $3$ to $40$ hrs.
For larger networks, we extrapolate time from a single iteration of a forward and a backward pass. The training time ranges from a few weeks to hundreds of weeks. Although these values seem to be quite large, high capacity machine learning models are known to take from a few days to weeks to achieve high accuracy when trained (both on CPU as well as GPU). Such networks can also benefit from transfer learning techniques, where a public pre-trained model is fine-tuned with a private dataset. This fine-tuning requires fewer epochs and hence speed up the overall runtime considerably.

\vspace{-5mm}
\subsection{Compute vs. Communication Cost}\label{subsec:compute}\vspace{-2mm}
Figure~\ref{fig:compute_commute} shows the computation time as compared to the communication time for the inference of a single input over different network sizes. We observe that the computation cost increases with the network size and becomes the dominant reason for the performance overhead in private deep learning with \ThisWork{}. %As seen in Figure~\ref{fig:compute_commute}, for larger networks, the computation costs dominate. 
\ch{The reason for this is because the complexity of matrix multiplication is ``super-quadratic'' i.e., to multiply two $n\times n$ matrices, the computation overhead is strictly larger than $O(n^2)$. Note that the communication of the matrix multiplication protocol in this work is only linear in the size of the matrices and has a round complexity of a single round. On the other hand, the non-linear operations, though more communication expensive in MPC, are applied on vectors of size equal to the output of the matrix product and thus are ``quadratic.'' In other words, the non-linear operations such as ReLU are applied on the output of the matrix multiplication (FC/Conv layers) and are applied on vectors of size $O(n^2)$ assuming they are applied on the output of the multiplication of two $n\times n$ matrices. Hence, for large network architectures, the time required for the matrix-multiplication dominates the overall cost.}

This observation is against the  conventional wisdom that MPC protocols are communication bound and not computation bound. When running larger networks such as AlexNet and VGG16, and especially for Tiny ImageNet, the computation time starts becoming a significant fraction of the total time. Hence, we claim that \falcon{} is optimized for communication rounds, specifically when operating over large networks. With our results, we motivate the community to focus on designing faster compute solutions using accelerators such as GPUs, parallelization, efficient matrix multiplications and caching, along with the conventional goals of reducing communication and round complexity.

\begin{figure}[h]
\centering
\includegraphics[width = 1\columnwidth]{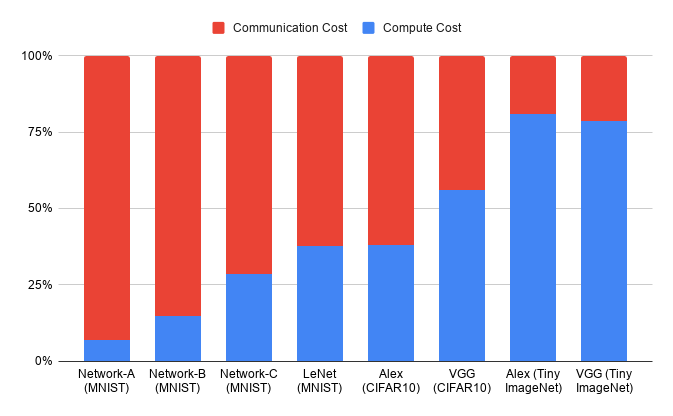}
\caption{\footnotesize Compute vs. communication cost for private inference using \falcon over WAN  for the malicious adversary. We show that as the network size increases, computation becomes a dominant factor in the overall end-to-end runtime.}
\label{fig:compute_commute}
\vspace{-0.4cm}
\end{figure}

% set up labelformat and labelsep for subfigure
\captionsetup[subfigure]{labelformat=parens, singlelinecheck=on}

\begin{figure*}[htp]
\centering
\begin{subfigure}[t]{0.3\textwidth}
		\centering
		\includegraphics[width=\textwidth]{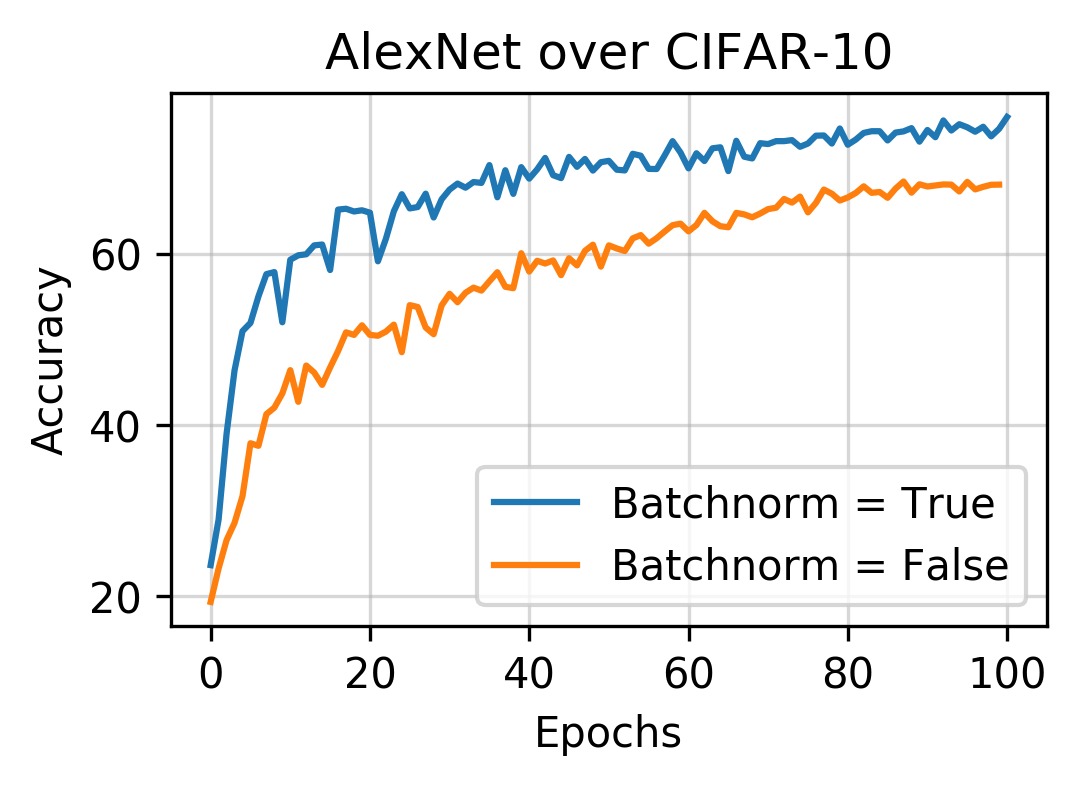}
		\caption{}\label{fig:BNa}
	\end{subfigure}
	\quad
\begin{subfigure}[t]{0.3\textwidth}
		\centering
		\includegraphics[width=\textwidth]{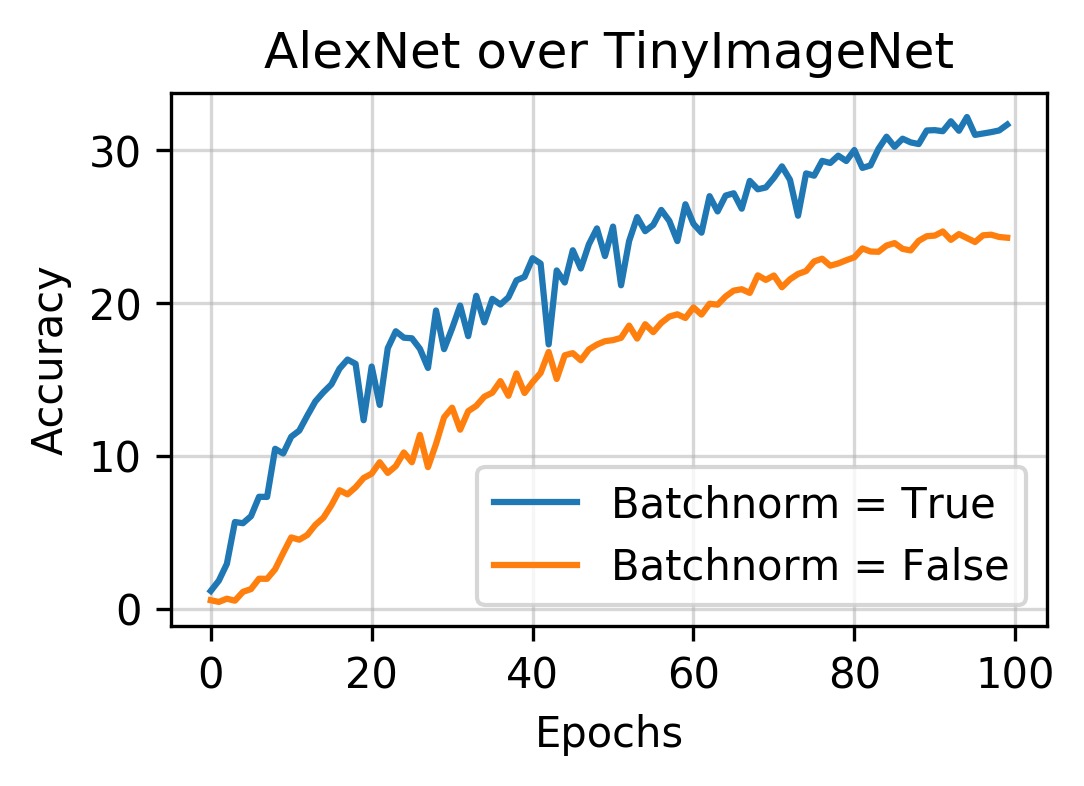}
		\caption{}\label{fig:BNb}
	\end{subfigure}
	\quad
\begin{subfigure}[t]{0.3\textwidth}
		\centering
		\includegraphics[width=\textwidth]{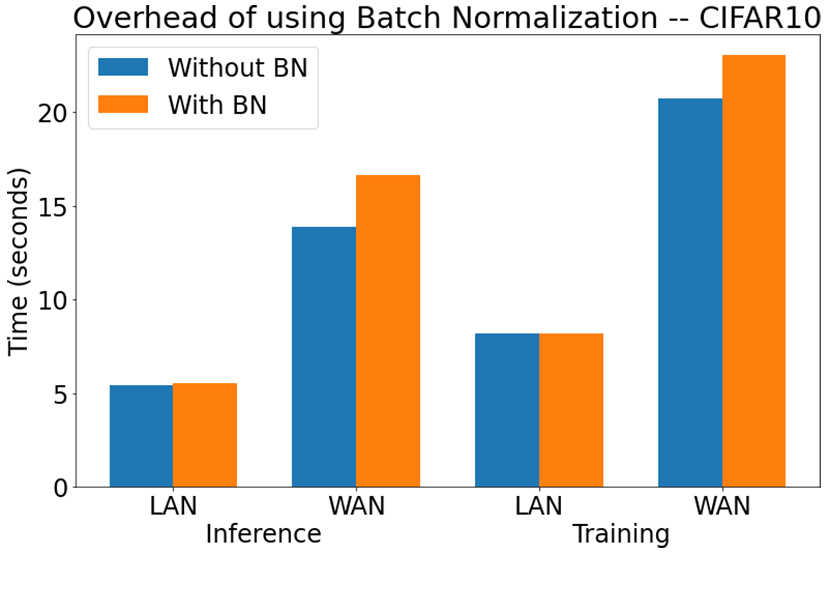}
		\caption{}\label{fig:BNc}
	\end{subfigure}
\caption{\footnotesize \ch{In Figs.~\ref{fig:BNa},~\ref{fig:BNb}, we study the model accuracy with and without batch normalization layers as a function of epochs for AlexNet network. As can be seen, batch normalization not only helps train the network faster but also train better networks. In Fig.~\ref{fig:BNc}, we study the performance overhead of running the network (using \ThisWork{}) with and without batch normalization layers.}}
\label{fig:BN}
\vspace{-4mm}
\end{figure*}

%%%%%%%%%%%%%%%%%%%%%%%%%%%%%%%%%%%%%%%%
\begin{table}[t]
\centering
\resizebox{\columnwidth}{!}{
\begin{tabular}{c @{\hskip 0.2in}  c c c c}
\mr{2}{Network} 	& Training		& Inference		& \ThisWork{} Inference		& Relative \\
							& Accuracy	& Accuracy		& Accuracy	& Error \\ \cmidrule(lr){1-5}

Network-A 			&	98.18\%	&	97.42\%			&	97.42\%					& 0.471\%\\			%SecureML
Network-B 			&	98.93\%	&	97.81\%			&	97.81\%					& 0.635\% \\			% Sarda
Network-C 			&	99.16\%		&	98.64\%		&	98.64\%				& 0.415\% \\			% MiniONN
LeNet 					&	99.76\%	&	99.15\%			&	96.85\%				& 0.965\% \\ 			% LeNet
\end{tabular}}
\caption{\ch{Summary of experiments involving accuracy of neural networks using secure computation. The first two columns refer to the plaintext accuracies and relative error refers to the average relative error of one forward pass computation using \ThisWork.} }
\label{tab:accuracy}
\vspace{-0.8cm}
\end{table}

\vspace{-0.5cm}
\subsection{Comparison vs. Plaintext Computation}\vspace{-3mm}

Given the surprising insights from Figure~\ref{fig:compute_commute}, we also compare the execution of privacy-preserving computations with plaintext computations. These results are summarized in Table~\ref{tab:plaintext}. We use standard PyTorch libraries for the plaintext code, similar hardware as that of privacy-preserving benchmarks for CPU-CPU comparison, and use a single Nvidia P100 GPU for the GPU-CPU comparison. 
%We compare our private deep learning overheads with plaintext execution of the same networks. 
Our findings indicate that private deep learning (over CPU) is within a factor of $40\times$-$1200\times$ of plaintext execution of the same network over CPU and within $50\times$-four orders  of magnitude that of plaintext execution over GPU (using PyTorch) when performed over LAN. The overhead further increases by $1.2\times$-$2.4\times$ when comparing against WAN evaluations. This indicates the importance of supporting GPUs and optimizers for private deep learning and showcases the need for further reducing the overhead of MPC protocols. We believe that it is beneficial for the broader research community to have an estimate of the gap between plaintext and privacy-preserving techniques for realistic size networks and datasets.

\begin{table*}[h]
\centering
\resizebox{\textwidth}{!}{
\begin{tabular}{c c c c c c c c c c c}
& \mc{2}{\mr{3}{Run-type}} & \mc{4}{CIFAR-10} & \mc{4}{Tiny ImageNet} \\ \cmidrule(lr){4-7} \cmidrule(lr){8-11}
& & & \mc{2}{Training} & \mc{2}{Inference} & \mc{2}{Training} & \mc{2}{Inference} \\ \cmidrule(lr){4-5} \cmidrule(lr){6-7} \cmidrule(lr){8-9} \cmidrule(lr){10-11}
& & & AlexNet & VGG-16 & AlexNet & VGG-16 & AlexNet & VGG-16 & AlexNet & VGG-16 \\  \midrule
\mr{2}{Plaintext} & CPU-only & localhost 			& 	$1.6\times 10^2$	&	$7.3\times 10^2$	& $7.2 \times 10^1$   &	$3.4\times 10^2$  & 
														$5.0\times 10^2$	&	$3.1\times 10^3$	& $2.5 \times 10^2$   &	$1.3\times 10^3$  \\

& GPU-assisted	& localhost								& $2.8\times 10^1$		&	$6.4\times 10^1$	& $3.8\times 10^1$   &	$5.8\times 10^1$   &
														$3.6\times 10^1$	&	$1.2\times 10^2$ 	& $3.8\times 10^1$		&	$5.7\times 10^1$ \\ \cmidrule(lr){1-11}
														
\mr{2}{Private} & CPU-only  & LAN			& 	$6.4\times 10^3$	&	$2.5\times 10^5$	& $5.6 \times 10^3$   &	$1.0\times 10^5$  & 
														$6.3\times 10^5$	&	$9.5\times 10^5$	& $2.3 \times 10^5$   &	$4.0\times 10^5$ \\% &  $\leftarrow$ LAN \\
& CPU-only & WAN 										& 	$2.4\times 10^4$	&	$6.2\times 10^5$	& $1.7 \times 10^4$   &	$1.6\times 10^5$  & 
														$7.8\times 10^5$	&	$1.2\times 10^6$	& $3.1 \times 10^5$   &	$5.9\times 10^5$ \\%&  $\leftarrow$ WAN \\
													\midrule
Private & \mc{2}{Bandwidth}			& 	$6.4\times 10^3$	&	$2.5\times 10^5$	& $5.6 \times 10^3$   &	$1.0\times 10^5$  & 
														$6.3\times 10^5$	&	$9.5\times 10^5$	& $2.3 \times 10^5$   &	$4.0\times 10^5$ \\	
\end{tabular}}
\caption{Comparison of private computation (for semi-honest protocols, cf Section~\ref{subsec:setup} for network parameters) with plaintext over the same hardware using PyTorch and a single NVIDIA P100 GPU. Numbers are for a 128 size batch in milliseconds.}
\label{tab:plaintext}
\vspace{-0.7cm}
\end{table*}

\vspace{-0.5cm}
\ch{\subsection{Batch Normalization and Accuracy}\vspace{-3mm}

We study the benefits of batch normalization for privacy-preserving training of neural networks. We compute the accuracy of partially trained models after each epoch with and without the batch normalization layers. As seen in Figs.~\ref{fig:BNa},~\ref{fig:BNb}, batch normalization layers not only help train the network faster but also train better networks. Fig.~\ref{fig:BNc} demonstrates the overhead of MPC protocols with and without batch normalization layers. Given the high round complexity of batch normalization, the gap is significant only in the WAN setting.

We also study the effect of our approximations and smaller datatype on the accuracy of the computation. We compare the evaluation of the networks with 64-bit {\tt float} datatypes over PyTorch against a 32-bit datatype {\tt uint32\_t} using fixed-point arithmetic for \ThisWork. The final layer outputs differ by small amounts (less than 1\%) in comparison with the high precision 64-bit computation. \chh{As a consequence, as seen in Table~\ref{tab:accuracy}, most networks show no/low loss in the overall neural network accuracy when the computation is performed as fixed-point integers over 32-bit datatype. This is because the final prediction is robust to small relative error in individual values at the output. This also makes the final prediction vector inherently noisy and may provide some defense against model inversion attacks.}

}

\vspace{-0.8cm}
\section{Related Work}\label{sec:related}\vspace{-3mm}
\textbf{Privacy-preserving training.} In a seminal paper on private machine learning, Mohassel~\etal~\cite{secureml} show protocols for a variety of machine learning algorithms such as linear regression, logistic regression and neural networks. Their approach is based on a 2-party computation model and rely on techniques such as oblivious transfer~\cite{OT} and garbled circuits~\cite{GC}. Following that, Mohassel~\etal~\cite{aby3} proposed a new framework called ABY$^3$ which generalizes and optimizes switching back and forth between arithmetic, binary, and Yao garbled circuits in a 3-party computation model. \comment{ABY$^3$ extends the 2-party ABY framework~\cite{ABY} into a 3-party setting and improves the performance of state-of-the-art multi-party computation. ABY$^3$ extend the techniques from Araki~\etal~\cite{AFLNO16} to arithmetic secret sharing. }
Wagh~\etal~\cite{securenn} proposed SecureNN that considers a similar 3-party model with semi-honest security and eliminate expensive cryptographic operations to demonstrate privacy-preserving training and inference of neural networks. SecureNN also provides {\em malicious privacy}, a notion formalized by Araki~\etal~\cite{AFLNO16} but not correctness under malicious corruption.
\falcon provides a holistic framework for both training and inference of neural networks while improving computation and communication overhead as compared to prior work.

\textbf{Privacy-preserving inference.} Privacy-preserving inference has received considerable attention over the last few years. Recall that we have summarized some of these works in Table~\ref{tab:comparison}. Private inference typically relies on one or more of the following techniques: secret sharing~\cite{securenn, aby3}, garbled circuits~\cite{xonn, ezpc}, homomorphic encryption~\cite{minionn, secureml, helen} or Goldreich-Micali-Wigderson (GMW)~\cite{gmw87, chameleon}, each with its own advantages and disadvantages.
CryptoNets~\cite{cryptonets} was one of the earliest works to demonstrate the use of homomorphic encryption to perform private inference. CryptoDL~\cite{2017arXiv171105189H} developed techniques that use approximate, low-degree polynomials to implement non-linear functions and improve on CryptoNets. DeepSecure~\cite{deepsecure} uses garbled circuits to develop a privacy-preserving deep learning framework.

%SecureML~\cite{secureml} uses homomorphic encryption, garbled circuits, and secret sharing to perform private inference. ABY$^3$ develops a mixed protocol framework and switches between arithmetic, binary and garbled circuits to achieve the same goals. 
Chameleon~\cite{chameleon} is another mixed protocol framework that uses the Goldreich-Micali-Wigderson (GMW) protocol~\cite{gmw87} for low-depth non-linear functions, garbled circuits for high-depth functions and secret sharing for linear operations to achieve high performance gains. The above three~\cite{secureml, aby3, chameleon} demonstrate private machine learning for other machine learning algorithms such as SVMs, linear and logistic regression as well. Gazelle~\cite{gazelle} combines techniques from homomorphic encryption with MPC and optimally balances the use of a specially designed linear algebra kernel with garbled circuits to achieve fast private inference. EzPC~\cite{ezpc} is a ABY-based~\cite{ABY} secure computation framework that translates high-level programs into Boolean and arithmetic circuits. Riazi~\etal{} propose a framework XONN~\cite{xonn} and showcase compelling performance for inference on large {\em binarized} neural networks and uses garbled circuits to provide constant round private inference. The work also provides a simple easy-to-use API with a translator from Keras~\cite{keras} to XONN. EPIC~\cite{epic} demonstrates the use of transfer learning in the space of privacy-preserving machine learning while Quotient~\cite{quotient} takes the first steps in developing two party secure computation protocols for optimizers and normalizations. CrypTFlow~\cite{cryptflow} builds on SecureNN and uses trusted hardware to achieve maliciously secure protocols in a 3PC model. Delphi~\cite{delphi} builds on Gazelle to further improve performance and proposes a novel planner that automatically generates neural network architecture configurations that navigate the performance-accuracy trade-offs. 
Astra~\cite{astra} is a 3PC protocol with semi-honest security and forms the foundation for a few follow-up works. BLAZE~\cite{blaze} builds on Astra to achieve malicious security and fairness in a 3PC honest majority corruption model and uses an adder circuit approach for non-linear function computation. Trident achieves the same result in a 4PC model with further performance improvements. FLASH~\cite{flash} also proposes a 4PC model that achieves malicious security with guaranteed output delivery. 
QuantizedNN~\cite{quantizednn} proposes an efficient PPML framework using the quantization scheme of Jacob~\etal~\cite{jacobquantization} and provides protocols in all combinations of semi-honest/malicious security and honest majority vs dishonest majority corruptions. 
%Recent works~\cite{practicalfully, boneh2019zero} explore a stronger adversarial model called guaranteed output delivery in a similar 3 party honest majority corruption model.
%Alternatively, protocols such as BDOZa~\cite{bdoza} and SPDZ~\cite{spdz} operate in a ``dishonest-majority'' adversarial model and allow $n-1$ out of the $n$ computing parties to be corrupted. Helen~\cite{helen} allows privacy-preserving distributed training of linear models in a similar setting. 
%\comment{Follow-up works include Topgear~\cite{topgear}, Overdrive~\cite{overdrive} which improve the efficiency of zero-knowledge proofs, MASCOT~\cite{mascot} which uses efficient pre-processing using oblivious transfers, and Marbled Circuits~\cite{marbled} which uses efficient share conversions for Boolean shares.}
%\comment{
%\textbf{Other privacy-preserving analytics:} Other approaches to privacy-preserving use techniques such as differential privacy or frameworks similar to federated learning~\cite{federratedlearning}. Abadi~\etal~\cite{abadi2016deep} develop algorithmic techniques for learning within the framework of differential privacy. PATE~\cite{papernot2016semi} demonstrate private learning over several different ML models. Bonawitz~\etal~\cite{bonawitz2017practical} demonstrate a privacy-preserving approach in a federated learning setting to aggregate user-provided model updates for a deep neural network.}

\vspace{-0.6cm}
\vspace{-2mm}
\section{Conclusion}\label{sec:conclusion}
\vspace{-3mm}
\ThisWork{} supports new protocols for private training and inference in a honest-majority 3-party setting. Theoretically, we propose novel protocols that improve the round and communication complexity and provide security against maliciously corrupt adversaries with an honest majority. 
\ThisWork{} thus provides malicious security and provides several orders of magnitude performance improvements over prior work. Experimentally, \falcon is the first secure deep learning framework to examine performance over large-scale networks such as AlexNet and VGG16 and over large-scale datasets such as Tiny ImageNet. We also are the first work to demonstrate efficient protocols for batch-normalization which is a critical component of present day machine learning.

\vspace{-0.8cm}
\section*{Acknowledgments}
\vspace{-3mm}
We thank Vikash Sehwag for his help with the experiments, the anonymous reviews, and our Shepherd Melek \"{O}nen. We also thank the following grant/awards for supporting this work: Facebook Systems for ML award, Qualcomm Innovation Fellowship, Princeton CSML DataX award, Princeton E-ffiliates Partnership award, Army Research Office YIP award, Office of Naval Research YIP Award, and National Science Foundation's CNS-1553437 and CNS-1704105, ISF grant 2774/20, BSF grant 2018393, NSF-BSF grant 2015782, and a grant from the Ministry of Science and Technology, 
Israel, and the Dept. of Science and Technology, Government of India.

\vspace{-0.8cm}

\bibliographystyle{IEEEtran}
\bibliography{bib}

\appendix

%\section{Appendix}\label{sec:appendix}
%
%Here we provide some relevant technical details on the various layers of neural networks as well as the different network architectures used in this work.

\vspace{-0.6cm}
\section{Recent Related Work}
\vspace{-0.2cm}

We compare the theoretical complexities of protocols in Astra, BLAZE, FLASH, and Trident with protocols in \ThisWork{}. Since the approach for computing non-linear operations is fundamentally different, we compare the end-to-end overhead of the ReLU protocol in each of these frameworks. Table~\ref{tab:rebuttal} shows a comparison of the theoretical complexities. Note that BLAZE and Astra are 3PC protocols and FLASH and Trident are 4PC protocols. In terms of evaluation of neural networks, most of these works evaluate their approach only over DNNs. None of these frameworks evaluate their approaches for \textit{training} of neural networks. Comparison of concrete efficiency of these protocols is documented in Table~\ref{tab:inference_sec}.

\vspace{-0.6cm}
\section{Theoretical Complexity}
\label{app:overhead}
\vspace{-0.2cm}

We theoretically estimate the overheads of our protocols in Table~\ref{tab:overheads}. The dominant round complexity for private compare comes from the string multiplication in Step~\ref{code:PC:blind}. $\wa{3}$ requires one additional round and one additional ring element (two in malicious security) over private compare.
Computing derivative of ReLU is a local computation over the $\wa{3}$ function. Computing ReLU requires two additional rounds and one ring element (two for malicious). Maxpool and derivative of require rounds proportional to the area of the filter. Finally, pow, division, and batch-norm require a quadratic number of rounds in $\ell$.

\begin{table*}[h]
\centering
\resizebox{\textwidth}{!}{
\begin{tabular}{c@{\hskip 5mm} c c c c c c c c c c}
& \multicolumn{6}{c}{3PC} & \multicolumn{4}{c}{4PC} \\ \cmidrule(lr){2-7} \cmidrule(lr){8-11}
\multirow{2}{*}{Protocol} & \multicolumn{2}{c}{Astra} & \multicolumn{2}{c}{BLAZE} & \multicolumn{2}{c}{\sc Falcon} & \multicolumn{2}{c}{FLASH} & \multicolumn{2}{c}{Trident} \\  \cmidrule{2-11}
& Round  & Comm. & Round  & Comm. & Round  & Comm. & Round  & Comm. & Round  & Comm. \\ \midrule 
Multiplication & $1$ & $4 \ell$ & $1$ & $3\ell$ & $1$ & $4 \ell$ & $1$ & $3\ell$ & $1$ & $3\ell$ \\
ReLU           & $3 + \log \ell$ & $45 \ell$ & $4$ & $(\kappa + 7)\ell$ & $5 + \log \ell$ & $32 \ell$ & $\log \ell + 10$ & $46 \ell$ & $4$ & $8\ell + 2$ \\
\end{tabular}}
\caption{Comparison of theoretical complexities of privacy-preserving protocols for Multiplication and ReLU with Astra, BLAZE, FLASH, Trident, and {\sc Falcon}. $\ell$ is the bit-size of the datatype and $\kappa$ is the security parameter (usually set to 40).}
\label{tab:rebuttal}
\end{table*}

\iffullversion
\section{Neural Networks}\label{subsec:nn}
We present a brief summary of neural networks as well as the our evaluation benchmarks. Neural Networks, in particular Convolutional Neural Networks (CNN) form the state-of-the-art techniques for image classification. The operation of neural networks is most widely based on stochastic gradient descent and usually iterates over the following three components: a forward pass, a backward pass and a parameter update phase.

A neural network architecture is defined by the combination of layers that compose the network. Various types of layers such as convolution, fully connected, pooling layers, and activation functions are used in different combinations to form the network. In the training phase, a neural network takes in a batch of inputs and outputs ``a guess'' (forward pass). The ground truth is then used to compute errors using chain rule (back-prop) and finally update the network parameters (update phase). In the inference phase, the output of the forward pass is used for prediction purposes. Below we look at the various components required by state-of-the-art neural networks.

Our general framework supports the following types of layers: convolutional, fully connected, pooling layers (max and mean pooling), normalization layers and the ReLU activation function. Together these enable a vast majority of networks used in machine learning literature. In the forward pass, each layer takes in an input from the previous layer and generates the output (input for the following layer) using learnable parameters such as weights, biases etc. The final layer output is used to then compute the loss using a loss function (such as cross-entropy, mean squared etc.). In the backward pass, the final layer loss is propagated backwards through each layer using the chain rule. Finally, each layer uses the associated loss to update its learnable parameters. Below we look at each layer in detail. We use Einstein tensor notation with $\epsilon_{ab}$ to denote the Kronecker Delta (to avoid confusion with the error $\delta$) to describe each layer.

\subsection{Convolutional Layer}
The input to a convolution layer is a 4D tensor $\mathbb{R}^{w_{\mathsf{in}}, h_{\mathsf{in}}, D_{\mathsf{in}}, B}$ where $w_{\mathsf{in}}, h_{\mathsf{in}}$ are the width and height of the input, $D_{\mathsf{in}}$ is the number of input filters and $B$ is the batch size. The hyper-parameters are the number of output filters $D_{\mathsf{out}}$, the filter size $F$, the stride $S$ and the amount of zero padding $P$. The output of the layer is another 4D tensor $\mathbb{R}^{w_{\mathsf{out}}, h_{\mathsf{out}}, D_{\mathsf{out}}, B}$ where $w_{\mathsf{out}} = (w_{\mathsf{in}} - F + 2*P)/S + 1$ and $h_{\mathsf{out}} = (h_{\mathsf{in}} - F + 2*P)/S + 1$. The weights are 4D tensors in $\mathbb{R}^{F,F,D_{\mathsf{in}}, D_{\mathsf{out}}}$ and biases are a vector in  $\mathbb{R}^{D_{\mathsf{out}}}$.

The forward pass is simply a convolution between the inputs activation and the weights plus the bias. The backward pass as well as the update equations are also convolutions which can all be implemented as matrix multiplications. We use the following notation: activations are represented by $a^l$ and indexed by the layer number $l \in \{1 \hdots L\}$, $\delta^l$ represents $\frac{\partial C}{\partial a^l}$, the error of layer $l$, weights and biases are represented by $w$ and $b$. Dimension variables are: $\alpha \in \{1,\hdots w_{\mathsf{in}}\}, \beta \in \{1,\hdots, h_{\mathsf{in}}\}, r \in \{1, \hdots, D_{\mathsf{in}}\}, d \in \{1,\hdots, D_{\mathsf{out}} \}, b \in \{1, \hdots, B\}, x \in  \{1,\hdots, w_{\mathsf{out}}\}$, and $y \in \{1,\hdots, h_{\mathsf{out}}\}$

 \begin{subequations}
 \label{eq:convequations}
 \begin{align}
a^{l}_{x,y,d,b} \ &= \ w_{p,q,r,d} \cdot a^{l-1}_{(xS - P + p), (yS - P + q), r, b} + b_d \label{eq:convforward} \\
\delta^{l-1}_{\alpha, \beta, r, b} \ &= \ \delta^{l}_{x,y,d,b} \cdot w_{(\alpha + P - xS), (\beta + P - yS), r, d} \label{eq:convdelta} \\
\frac{\partial C}{\partial w_{p,q,r,d}} \ &= \ a^{l-1}_{(xS-P+p), (yS-P+q), r, b} \cdot \delta^{l}_{x,y,d,b} \label{eq:convupdatew}\\
\frac{\partial C}{\partial b_{d}} \ &= \ \delta^{l}_{x',y',d,b'} \cdot \epsilon_{xx'} \epsilon_{yy'} \epsilon_{bb'} \label{eq:convupdateb}
 \end{align}
 \end{subequations}

Equation~\ref{eq:convforward} is used for the forward pass, equation~\ref{eq:convdelta} is used for back-prop, and equations~\ref{eq:convupdatew},~\ref{eq:convupdateb} are used for updating layer parameters.

\subsection{Fully Connected Layer}
The input to a convolution layer is a matrix in $\mathbb{R}^{c_{\mathsf{in}}, B}$ where $B$ is the batch size. The layer is defined by the number of input and output channels $c_{\mathsf{in}}, c_{\mathsf{out}}$. The output of the layer is a matrix in $\mathbb{R}^{c_{\mathsf{in}}, B}$. The weights are a matrix in $\mathbb{R}^{c_{\mathsf{in}}, c_{\mathsf{out}}}$ and biases form a vector of size $\mathbb{R}^{c_{\mathsf{out}}}$.

The forward pass is a matrix multiplication of the input matrix with the weights matrix and bias added. The backward pass as well as the update equations require matrix multiplications. Using the notation as in the convolutional layer, the equations defining the fully connected layer are described as follows:

 \begin{subequations}
 \label{eq:fcequations}
 \begin{align}
a^{l}_{y,b} \ &= \ w_{p,y} \cdot a^{l-1}_{p, b} + b_y \label{eq:fcforward} \\
\delta^{l-1}_{x, b} \ &= \ \delta^{l}_{y,b} \cdot w_{x, y} \label{eq:fcdelta} \\
\frac{\partial C}{\partial w_{p,q}} \ &= \ a^{l-1}_{p, b} \cdot \delta^{l}_{q, b} \label{eq:fcupdatew}\\
\frac{\partial C}{\partial b_{y}} \ &= \ \delta^{l}_{y,b'} \cdot \epsilon_{bb'}  \label{eq:fcupdateb}
 \end{align}
 \end{subequations}

Equation~\ref{eq:fcforward} is used for the forward pass, equation~\ref{eq:fcdelta} is used for back-prop, and equations~\ref{eq:fcupdatew},~\ref{eq:fcupdateb} are used for updating layer parameters.

\subsection{Pooling Layer}
The input to a pooling layer (specifically Maxpool) is a 4D tensor $\mathbb{R}^{w_{\mathsf{in}}, h_{\mathsf{in}}, D_{\mathsf{in}}, B}$ where $w_{\mathsf{in}}, h_{\mathsf{in}}$ are the width and height of the input, $D_{\mathsf{in}}$ the number of input filters and $B$ the batch size. The hyper-parameters are the filter size $F$ and the stride $S$. The output of the layer is another 4D tensor $\mathbb{R}^{w_{\mathsf{out}}, h_{\mathsf{out}}, D_{\mathsf{in}}, B}$ where $w_{\mathsf{out}} = (w_{\mathsf{in}} - F)/S + 1$ and $h_{\mathsf{out}} = (h_{\mathsf{in}} - F)/S + 1$. There are no learnable parameters as the output is a fixed function of the input.

The forward pass is max operation over the filter and can be implemented using sequential comparisons. The backward pass requires a matrix multiplication with the derivative of Maxpool (which is a unit vector with 0's everywhere except at the location of the $\mathsf{argmax}$). For optimization, we compute this while computing the Maxpool in the forward pass. Since pooling layers do not introduce any parameters, there is no parameter update required for this layer.

 \begin{subequations}
 \label{eq:maxpoolequations}
 \begin{align}
a^{l}_{x,y,d,b} \ &= \ \left( \max_{p, q} a^{l-1}_{xS + p, yS+ q, d', b'} \right) \cdot \epsilon_{dd'} \epsilon_{bb'}\label{eq:maxpoolforward} \\
\delta^{l-1}_{\alpha, \beta, r, b} \ &= \ \left(\delta^{l}_{x,y, r, b} \otimes f_{xS + p, yS + q, r', b'} \right) \cdot  \\ & \qquad \qquad \epsilon_{rr'} \epsilon_{bb'} \epsilon_{\alpha (xS+p)} \epsilon_{\beta (yS+q)} \label{eq:maxpooldelta}
 \end{align}
 \end{subequations}

Here, $f$ denotes the derivative of the Maxpool function. Equation~\ref{eq:maxpoolforward} governs the forward pass and equation~\ref{eq:maxpooldelta} governs the back-prop.

\subsection{Normalization Layer}
Normalization is typically applied to the output of the first few layers for improved performance on two fronts -- stability and efficiency of training. Activations are normalized across a batch by subtracting the mean and dividing by the standard deviation. Finally, these normalized inputs are then scaled using two learnable parameters $\gamma, \beta$.

\begin{subequations}
 \label{eq:bnequations}
 \begin{align}
\mu_b \ &= \ \sum_{\alpha, \beta, r} a^{l-1}_{\alpha, \beta, r, b} \label{eq:bpforward1} \\
\sigma_b^2 \ &= \ \frac{1}{m} \sum_{\alpha, \beta, r} (a^{l-1}_{\alpha, \beta, r, b} - \mu_b)^2 \label{eq:bpforward2} \\
z^{l-1}_{\alpha, \beta, r, b} \ &= \ \frac{(a^{l-1}_{\alpha, \beta, r, b} - \mu_b)}{\sqrt{\sigma_b^2 + \epsilon}} \label{eq:bpforward3} \\
a^{l}_{\alpha, \beta, r, b} \ &= \gamma z^{l-1}_{\alpha, \beta, r, b} + \beta \label{eq:bpforward4}
\end{align}
\end{subequations}
where $m$ is the size of each batch. We set $\epsilon = 2^{-10}$.
% For computing the square root in step~\ref{eq:bpforward3} we use Newton's method. To find the square root of $a$, we start use the iterative given by Eq.~\ref{eq:invsqrt}. Alternatively, we can use the iterative approach given by Eq.~\ref{eq:sqrt} and use it in conjunction with the inverse computation given by Eq.~\ref{eq:appdiv}.
% \begin{equation}\label{eq:sqrt}
%     x_{n+1} = \frac{1}{2} \left( x_n + \frac{a}{x_n} \right)
% \end{equation}
Equations~\ref{eq:bpforward1}-\ref{eq:bpforward4} form the forward pass of the batch norm layer. The back-prop and update parameters are simply matrix multiplications and are omitted due to space constraints.

\subsection{ReLU Activation}
Rectified Linear Unit (ReLU) defined as $(x) = \mathsf{max}(0,x)$ is one of the most popular activation functions used in deep learning. It is applied to the output of most layers and simply applies the ReLU function to each input. Hence, the input and output both are matrices in $\mathbb{R}^{s_{\mathsf{in}}, B}$. Since the output is a fixed function of the inputs, there are no learnable parameters in this layer. The forward pass involves computing the ReLU function on each input whereas the backward pass involves a matrix multiplication with the derivative of ReLU function (which is 0 if the input is negative and 1 otherwise). There is not parameter update.

We use Stochastic Gradient Descent (SGD) to iteratively train the network to learn the right set of parameter values. We use the cross entropy loss function for training given by:
\begin{equation}\label{eq:costfunction}
C = - \frac{1}{n} \sum\limits_b \sum\limits_j \left( y_j \ln a^L_{j,b} + (1-y_j) \ln (1-a^L_{j,b}) \right)
\end{equation}
where $n$ is the batch size. These above 5 layers, can be used to implement a large fraction of the neural networks used in deep learning and specifically in computer vision.
\fi

\begin{table*}[t]
\centering
\resizebox{\textwidth}{!}{
\begin{tabular}{c@{\hskip 6mm} c@{\hskip 2mm} c c c c c}
%\toprule
& \multirow{2}{*}{Protocol} & \multirow{2}{*}{Dependence} & \multicolumn{2}{c}{Semi-Honest} & \multicolumn{2}{c}{Malicious} \\ \cmidrule{4-7}
 & & & Rounds & Comm & Rounds & Comm \\ \midrule
 \parbox[t]{2mm}{\multirow{3}{*}{\rotatebox[origin=c]{90}{$\substack{\textrm{\textbf{Basic}}\\ \textrm{\textbf{Protocols}}}$}}} & MatMul & $(x\times y)(y\times z)$ & $1$ & $kxz$ & $1$ & ${2kxz}$ \\
    & Private Compare & $n$ & $2 + \log_2 \ell$ & $2kn$ & $2 + \log_2 \ell $ & $4kn$ \\
    & $\wa{3}$ & $n$ & ${3 + \log_2 \ell}$ & $3kn$ & ${3 + \log_2 \ell}$ & $6kn$  \\ \midrule
\parbox[t]{2mm}{\multirow{7}{*}{\rotatebox[origin=c]{90}{$\substack{\textrm{\textbf{Compound}}\\ \textrm{\textbf{Protocols}}}$}}} & ReLU and & \multirow{2}{*}{${n}$} & \multirow{2}{*}{${5 + \log_2 \ell}$} & \multirow{2}{*}{$4kn$} & \multirow{2}{*}{${5 + \log_2 \ell}$} & \multirow{2}{*}{$8kn$}\\
& Derivative of ReLU &  &  &  &  & \\
& MaxPool and & \multirow{2}{*}{${n, \{w, h\}}$} & \multirow{2}{*}{${(wh-1)}(7 + \log_2 \ell)$} & \multirow{2}{*}{$5k + wh$} & \multirow{2}{*}{${(wh-1)}{(7 + \log_2 \ell)}$} & \multirow{2}{*}{$10k + 2wh$}\\
& Derivative of Maxpool &  &  &  &  & \\ 
& Pow   & $n$ & $5 \ell + \ell \cdot \log_2 \ell$ & $4kn\ell$ & $5 \ell + \ell \cdot \log_2 \ell$ & $8kn\ell$ \\
& Division & $n$ & ${7 + 5 \ell + \ell \cdot \log_2 \ell}$ & $4kn\ell + 7kn$ & ${7 + 5 \ell + \ell \cdot \log_2 \ell}$ & $8kn\ell + 14kn$ \\ 
& Batch Norm & $r,n$ & ${15 + 5 \ell + \ell \cdot \log_2 \ell}$ & $kr + 4kr\ell + 14krn$ & ${15 + 5 \ell + \ell \cdot \log_2 \ell}$ & $2kr + 8kr\ell + 28krn$ \\ %\bottomrule
\end{tabular}}
\caption{\footnotesize Theoretical overheads of basic and compound protocols. Communication is in Bytes where $\ell$ is the logarithm of the ring size and $k$ is its Byte size. We use $n$ to denote the size of the vector in vectorized implementations. \chh{Malicious protocols suffer from higher communication complexity compared to semi-honest protocols that results in poor concrete efficiency when implemented.}}
\label{tab:overheads}
\vspace{-0.5cm}
\end{table*}

%\begin{table*}[ht]
%\centering
%\begin{tabular}{c@{\hskip 6mm} c@{\hskip 2mm} c c c c}
%& \multirow{2}{*}{Protocol} & \multicolumn{2}{c}{Semi-Honest} & \multicolumn{2}{c}{Malicious} \\ \cmidrule{3-6}
% & & Rounds & Comm & Rounds & Comm \\ \midrule
% \parbox[t]{2mm}{\multirow{3}{*}{\rotatebox[origin=c]{90}{$\substack{\textrm{\textbf{Basic}}\\ \textrm{\textbf{Protocols}}}$}}} & MatMul & $1$ & $k$ & $1$ & ${2*k}$ \\
%    & Private Compare & ${\ell +2}$ & $2*k$ & ${\ell +2}$ & $4*k$ \\
%    & $\wa{3}$ & ${\ell +3}$ & $3*k$ & ${\ell +3}$ & $6*k$  \\ \midrule
%\parbox[t]{2mm}{\multirow{5}{*}{\rotatebox[origin=c]{90}{$\substack{\textrm{\textbf{Compound}}\\ \textrm{\textbf{Protocols}}}$}}} & ReLU and & \multirow{2}{*}{${\ell +5}$} & \multirow{2}{*}{$4*k + 1/8$} & \multirow{2}{*}{${\ell +5}$} & \multirow{2}{*}{$8*k + 2/8$}\\
%& Derivative of ReLU &  &  &  & \\
%& MaxPool and & \multirow{2}{*}{${f_w*f_h - 1}*{\ell + 8}$} & \multirow{2}{*}{$5*k + 2/8 + f_w*f_h$} & \multirow{2}{*}{${(f_w*f_h - 1)}*{\ell + 8}$} & \multirow{2}{*}{$10*k + 4/8 + 2*f_w*f_h$}\\
%& Derivative of Maxpool &  &  &  &  \\ 
%& Division & ${3}$ & $3*k$ & ${3}$ & $6*k$ \\ \bottomrule
%\end{tabular}
%\vspace{3mm}
% \caption{\textbf{Theoretical overheads of basic and compound protocols. Communication is in Bytes where $\ell$ is the logarithm of the ring size and $k$ is its Byte size.}}
%\label{tab:overheads}
%\end{table*}

\vspace{-0.6cm}
\section{Datasets}\label{app:datasets}
\vspace{-0.2cm}

We select 3 datasets popularly used for training image classification models --- MNIST~\cite{mnist}, CIFAR-10~\cite{cifar}, and Tiny ImageNet~\cite{tinyimagenet}. We describe these below:

\begin{enumerate}[label=(\Alph*), itemsep=0pt, topsep=2pt, leftmargin=10mm]
    \item \textbf{MNIST~\cite{mnist}: } MNIST is a collection of handwritten digits dataset. It consists of 60,000 images in the training set and 10,000 in the test set. Each image is a $28 \times 28$ pixel image of a handwritten digit along with a label between 0 and 9. We evaluate Network-A, B, C, and the LeNet network on this dataset in both the semi-honest and maliciously secure variants.
    \item \textbf{CIFAR-10~\cite{cifar}: } CIFAR-10 consists of 60,000 images (50,000 training and 10,000 test images) of 10 different classes (such as airplanes, dogs, horses etc.). There are 6,000 images of each class with each image consisting of a colored $32 \times 32$ image. We perform private training and inference of AlexNet and VGG16 on this dataset.
    \item \textbf{Tiny ImageNet~\cite{tinyimagenet}: } Tiny ImageNet dataset consists of 100,000 training samples and 10,000 test samples with 200 different classes~\cite{tinyimagenet}. Each sample is cropped to a size of $64 \times 64 \times 3$. We perform private training and inference of AlexNet and VGG16 on this dataset.
\end{enumerate}

\vspace{-0.6cm}
\section{Networks}\label{app:networks}
\vspace{-0.2cm}
We evaluate \falcon on the following popular deep learning networks. We select these networks based on the varied range of model parameters and different types of layers used in the network architecture. The first three networks are purposely selected to perform performance comparison of \falcon with prior work that evaluated on these models. The number of layers that we report include only convolutional and fully connected layers. 
\iffullversion We provide a detailed configuration for each of the networks in Appendix~\ref{app:net_arch}. \fi
We also note that we enable the exact same functionality as prior work with no further approximations. Our networks achieve an accuracy of 97.42\% on Network-A, 97.81\% on Network-B, 98.64\% on Network-C, and 99.15\% on LeNet -- similar to the accuracy obtained by SecureNN, SecureML, and ABY$^3$~\cite{securenn, secureml, aby3}.

\begin{enumerate}[label=(\Alph*), itemsep=0pt, topsep=2pt, leftmargin=*]
    \item \textbf{Network-A: }This is a 3 layer fully-connected network with ReLU activation after each layer as was evaluated in SecureML~\cite{secureml}\iffullversion (see Figure~\ref{fig:secureml})\fi. This is the smallest network with around 118K parameters. %118282

    \item \textbf{Network-B: }This network is a 3 layer network with a single convolution layer followed by 2 fully-connected layers and ReLU activations. This architecture is chosen from Chameleon~\cite{chameleon} with approximately 100K parameters\iffullversion (see Figure~\ref{fig:chameleon})\fi. %99240

    \item \textbf{Network-C: }This is a 4 layer network with 2 convolutional and 2 fully-connected layers selected from prior work MiniONN~\cite{minionn}. This network uses Max Pooling in addition to ReLU layer and has around 10,500 parameters in total\iffullversion (shown in Figure~\ref{fig:minionn})\fi.% gives the detailed architecture. %33935

    \item \textbf{LeNet: }This network, first proposed by LeCun et al.~\cite{lenet} was used in automated detection of zip codes and digit recognition~\cite{zipcodes}. The network contains 2 convolutional layers and 2 fully connected layers with 431K parameters\iffullversion (shown in Figure~\ref{fig:lenet})\fi. %431080

    \item \textbf{AlexNet: } AlexNet is the famous winner of the 2012 ImageNet ILSVRC-2012 competition~\cite{alexnet}. It has 5 convolutional layers and 3 fully connected layers and uses batch norm layer for stability, efficient training and has about 60 million parameters\iffullversion (see Figure~\ref{fig:alexnet})\fi. \falcon is the first private deep learning framework that evaluates AlexNet because of the support for batch norm layer in our system.

    \item \textbf{VGG16: }The last network which we implement is called VGG16, the runner-up of the ILSVRC-2014 competition~\cite{vgg16}.  VGG16 has 16 layers and has about 138 million parameters\iffullversion (see Figure~\ref{fig:VGG16})\fi.
\end{enumerate}

%\vspace{-0.6cm}
%\section{Network Architectures}
%\label{app:net_arch}
%\vspace{-0.6cm}
%
%\input{Figures/SecureML}
%\input{Figures/MiniONN}
%\input{Figures/VGG16}
%\input{Figures/LeNet}
%\input{Figures/AlexNet}
%\input{Figures/Sarda}

\section{Functionality Descriptions}\label{app:functionalities}\vspace{-2mm}

\begin{Boxfig}{Ideal functionality for \protx{PC}}{FPC}{$\Funct{PC}$}
\footnotesize
\begin{description}

% \item[] The functionality receives inputs $\left\{ \Share{x_i}{\prm} \right\}_{i = 1}^{\ell}, r$
% \item[] Reconstruct bits $x_i$ and if reconstruction fails send $\Abort$ to the adversary. 
% \item[] Compute $x = \sum x_i \cdot 2^i$ and $b = (x > r)$
% \item[] Generate random shares of $b$ consistent with the adversarial outputs.
\item[Input: ]The functionality receives inputs $\{\Share{x_i}{\prm}\}_{i = 1}^{\ell}, r$

\item[Output: ]Compute the following 
\begin{enumerate}
    \item Reconstruct bits $x_i$ and $x = \sum x_i \cdot 2^i$
    \item Compute $b = (x \geq r)$
    \item Generate random shares of $b$ and send back to the parties
\end{enumerate}
\end{description}
\end{Boxfig}

\begin{Boxfig}{Ideal functionality for \protx{WA}}{FWA}{$\Funct{WA}$}
\footnotesize
\begin{description}

\item[Input: ]The functionality receives inputs $\Share{a}{L}$. 

\item[Output: ]Compute the following 
\begin{enumerate}
    \item Compute $b = \wa{3}(a_1,a_2,a_3,L)$
    \item Generate random shares of $b$ and send back to the parties
\end{enumerate}
\end{description}
\end{Boxfig}

\begin{Boxfig}{Ideal functionality for \protx{ReLU}}{FRELU}{$\Funct{ReLU}$}
\footnotesize
\begin{description}

\item[Input: ]The functionality receives inputs $\Share{a}{L}$. 

\item[Output: ]Compute the following 
\begin{enumerate}
    \item Compute $b = \mathsf{ReLU}(a_1+a_2+a_3 \pmod{L})$
    \item Generate random shares of $b$ and send back to the parties
\end{enumerate}
\end{description}
\end{Boxfig}

\begin{Boxfig}{Protocols for generating various pre-processing material}{fig:prot-prep}{\protx{Prep}}
\footnotesize
\begin{description}

\item[Usage: ]This is used to generate pre-processing material required for the online protocol. %We use a state-of-the-art protocol for the same adversarial model -- ABY$^3$ for generating this. 

\item[Setup:] This step will have to be done only once.
    \begin{enumerate}[noitemsep]
        \item Each party $P_i$ chooses a random seed $k_i$
        \item Send this random seed to party $P_{i+1}$
    \end{enumerate}
    
\item[Common randomness:] Let $F$ be any seeded PNRG. Then 3-out-of-3 and 2-out-of-3 common randomness described in Section~\ref{subsec:basicops} can be generated as follows:
    \begin{enumerate}[noitemsep]
        \item $\alpha_i = F_{k_i}(\mathsf{cnt}) - F_{k_{i-1}}(\mathsf{cnt})$ and  $\mathsf{cnt}$++
        \item $(\alpha_i, \alpha_{i-1}) = (F_{k_i}(\mathsf{cnt}),  F_{k_{i-1}}(\mathsf{cnt}))$ and $\mathsf{cnt}$++
    \end{enumerate}    

\item[Truncation Pair: ]Generate truncation pair $\share{r}, \share{r'} = \share{r/2^d}$. 
\begin{enumerate}
    \item Run protocol $\protx{trunc2}$ from~\cite{aby3} (Figure~3)
\end{enumerate}

\item[Correlated randomness for Private Compare: ]Generate correlated randomness required for $\protx{PC}$
    \begin{enumerate}
        \item Sample random bit $\Share{b}{2}$
        \item Use bit injection from~\cite{aby3} $\Share{b}{2} \rightarrow \Share{b}{\prm}$
        \item Sample random values $m_1, \hdots m_k \in \Z_{\prm}$.
        \item Compute and open $m_1^{\prm-1}, \hdots, m_k^{\prm-1}$.
        \item Remove openings that equal 0 and queue openings that equal 1. Note that this computation takes $\lceil \log_2 \prm \rceil$ rounds and can be amortized for efficiency (by setting a large value of $k$).
    \end{enumerate}
    
\item[Correlated randomness for Wrap$_3$: ]Generate correlated randomness required for $\protx{WA}$
    \begin{enumerate}
        \item Sample random bits $\Share{r_i}{2}$ for $i \in [\ell]$
        \item Perform bit composition from~\cite{aby3} to get $\Share{r_i}{\ring}$
        \item Use bit injection from~\cite{aby3} $\Share{r_i}{2} \rightarrow \Share{r_i}{\prm}$
        \item Use the optimized full adder $\mathsf{FA}$ to compute the final carry bit. Note that this bit is precisely $\wa{3}(\cdot)$ 
    \end{enumerate}
    
\item[Correlated randomness for ReLU: ]Generate correlated randomness required for $\protx{ReLU}$
    \begin{enumerate}
        \item Sample random bit $\Share{b}{2}$
        \item Use bit injection from~\cite{aby3} $\Share{b}{2} \rightarrow \Share{b}{\ring}$
    \end{enumerate}

\item[Correlated randomness for Maxpool and Division: ]No additional correlated randomness necessary other than that used in their subroutines.
\end{description}
\end{Boxfig}

\begin{Boxfig}{Ideal functionality for \protx{Maxpool}}{FMAX}{$\Funct{Maxpool}$}
\footnotesize
\begin{description}

\item[Input: ]The functionality receives inputs $\Share{a_1}{\ring}, \hdots \Share{a_n}{\ring}$.

\item[Output: ]Compute the following 
\begin{enumerate}
    \item Reconstruct $a_1,\hdots a_n$ and compute $k = \mathsf{argmax} \{a_1, \hdots a_n\}$.
    \item Set $\bm{e_k} = \{e_1, e_2, \hdots e_n\}$ with $e_i = 0 ~ \forall i \neq k$ and $e_k = 1$. 
    \item Generate random shares of $a_k$ and $\bm{e_k}$ and send back to the parties.
\end{enumerate}
\end{description}
\end{Boxfig}

\begin{Boxfig}{Functionality for \protx{Pow}}{FPOW}{$\Funct{Pow}$}
\footnotesize
\begin{description}

\item[Input: ]The functionality receives inputs $\Share{b}{\ring}$ and an index $k \in \{0, 1,\hdots \ell-1\}$. 

\item[Output: ]Compute each bit of $\alpha$ sequentially as follows:
\begin{enumerate}
    \item Reconstruct $b$.% and send $\Abort$ to all parties if that fails. 
    \item Compute $\alpha$ such that $2^{\alpha-1} < b \leq 2^{\alpha}$
    \item If $k = 0$ send $\alpha[i]$ for $i \in \{\ell-1, \hdots , 0\}$ to all parties.
    \item If $k \neq 0$ send $\alpha[i]$ for $i \in \{\ell-1, \hdots , k\}$ to all parties and then $\Abort$.
    % \item For $i \in \{ \ell-1, \hdots , 0\}$
    % \begin{enumerate}
    %     \item Send $\alpha[i]$ to each party ($i^{\tiny{\mathsf{th}}}$-bit of $\alpha$). 
    %     \item Wait for $\Abort$ or $\mathsf{Ok}$ from the adversary.
    %     \item If $\Abort$, send $\Abort$ to all parties and abort else continue.
    % \end{enumerate}
\end{enumerate}
\end{description}
\end{Boxfig}

\begin{Boxfig}{Ideal functionality for \protx{Div}}{FDIV}{$\Funct{Div}$}
\footnotesize
\begin{description}

\item[Input: ]The functionality receives inputs $\Share{a}{\ring}, \Share{b}{\ring}$ and an index $k \in \{0, 1,\hdots \ell-1\}$. 

\item[Output: ]Compute the following 
\begin{enumerate}
	\item Reconstruct $a, b$.% and send $\Abort$ to all parties if that fails. 
    \item Compute $\alpha$ such that $2^{\alpha} \leq b < 2^{\alpha+1}$
    \item If $k = 0$ send $\alpha[i]$ for $i \in \{\ell-1, \hdots , 0\}$ to all parties.
    \item If $k \neq 0$ send $\alpha[i]$ for $i \in \{\ell-1, \hdots , k\}$ to all parties and then $\Abort$.
    \item Generate random shares of $a \cdot \mathsf{AppDiv}(b)$ and send back to the parties 
\end{enumerate}
\end{description}
\end{Boxfig}

\begin{Boxfig}{Ideal functionality for \protx{BN}}{FBN}{$\Funct{BN}$}
\footnotesize
\begin{description}

\item[Input: ]The functionality receives inputs $\Share{a_1}{\ring}, \hdots \Share{a_n}{\ring}$ and $\Share{\gamma}{\ring}, \Share{\beta}{\ring}$ and an index $k \in \{0, 1,\hdots \ell-1\}$. 

\item[Output: ]Compute the following 
\begin{enumerate}
    \item Reconstruct $a_1,\hdots a_n$ and compute $\mu$ and  $\sigma^2$ as given in Step 1,2 of Algorithm~\ref{algo:batchnorm}
    \item Set $b = \sigma^2 + \epsilon$ and compute $\alpha$ such that $2^{\alpha} \leq b < 2^{\alpha+1}$
    \item If $k = 0$ send $\alpha[i]$ for $i \in \{\ell-1, \hdots , 0\}$ to all parties.
    \item If $k \neq 0$ send $\alpha[i]$ for $i \in \{\ell-1, \hdots , k\}$ to all parties and then $\Abort$.
    \item Complete steps 4-8 of Algorithm~\ref{algo:batchnorm} and return random shares of the output.
\end{enumerate}
\end{description}
\end{Boxfig}

\end{document}